\documentclass[5p,times,twocolumn]{elsarticle}
\usepackage[T1]{fontenc}
\usepackage{amsmath,amssymb,amsthm}
\usepackage{graphicx}
\usepackage{microtype}
\usepackage{xcolor}
\usepackage{booktabs}
\usepackage{array}
\usepackage[colorlinks=true,linkcolor=blue!50!black,citecolor=blue!50!black,urlcolor=blue!50!black]{hyperref}
\graphicspath{{figures/}}
\journal{Physica A: Statistical Mechanics and its Applications}
\biboptions{sort&compress}

\newtheorem{theorem}{Theorem}
\newtheorem{proposition}[theorem]{Proposition}
\newtheorem{corollary}[theorem]{Corollary}
\newtheorem{definition}[theorem]{Definition}
\newtheorem{remark}[theorem]{Remark}
\newcommand{\dd}{\mathrm{d}}
\newcommand{\PP}{\mathcal{P}}
\newcommand{\SF}{\mathcal{S}}
\newcommand{\Wone}{W_1}
\newcommand{\DF}{D_F}

\begin{document}
\begin{frontmatter}

\title{Response-Aware Coarse-Graining in Summary-Mediated Opinion Dynamics}

\author{Ruben E. Ara\'ujo\corref{cor1}}
\ead{rubenesteche@hotmail.com}
\cortext[cor1]{Corresponding author.}
\address{Independent Researcher, Brazil}

\begin{abstract}
Interacting populations increasingly respond to compressed representations
generated from their own collective state. We ask which information such an
endogenous coarse-graining must preserve in order to reproduce the collective
dynamics. The question has two complementary answers for continuous opinions
with attraction, indifference, and repulsion. Inside any strict $K$-bloc
response cell where every represented stance attracts every bloc, the dynamics
reduce exactly to translation and disagreement. The represented distribution's
shape cancels from all disagreement modes, which form a homogeneously anchored
weighted signed-Laplacian system, while represented-mean error acts only on
translation. This reduction yields a geometry-dependent stability boundary
and shows that prevalence bias produces secular drift at marginal stability
only through its projection onto the neutral eigenspace. Outside that common
linear-response regime, threshold geometry makes distributional information
observable. The induced force discrepancy is a behavior-defined integral
probability metric; restricted force probes identify generic finite empirical
discussions, and for finitely supported candidates uniform force error controls
total-variation and Wasserstein error while support reduction incurs a nonzero
distortion floor. A cumulative-
discrepancy characterization further explains why transport-small summaries
can remain behaviorally inaccurate at response thresholds. Mode-resolved
integrations, projected particle simulations, finite-size ensembles, and
threshold-crossing experiments test these two regimes. As a deployment
consequence, broadcast summary error survives as collective noise whereas
independently personalized error self-averages. The model is stylized and not
empirically calibrated; generated discussion summaries motivate the general
problem of endogenous, response-aware coarse-graining.
\end{abstract}

\begin{keyword}
sociophysics \sep opinion dynamics \sep response-aware coarse-graining
\sep signed interactions \sep inverse stability \sep common noise
\end{keyword}

\end{frontmatter}

\section{Introduction}
\label{sec:introduction}

Statistical physics provides a macroscopic language for collective social
states emerging from local interaction rules \citep{castellano2009,lorenz2007,
starnini2026}. In continuous-opinion models, attraction, bounded confidence,
repulsion, external fields, and noise produce consensus, fragmentation,
polarization, and switching between them \citep{deffuant2000,hegselmann2002,
jager2005,sabinmiller2020,goddard2022,bernardo2024}. Media are usually represented as a
fixed field, a prescribed distribution, or an external agent
\citep{mckeown2006,vazmartins2010,pineda2015,pansanella2023}. A different
feedback arises when the signal shown to the population is itself a compressed
representation of the population's current state. The field is then neither
external nor fixed: microscopic opinions are coarse-grained, returned to the
same population, and coarse-grained again after the population responds.
This system-performed coarse-graining and its feedback onto component behavior
belong to the broader framework of endogenous coarse-graining and downward
causation developed by \citet{flack2017}.

Generated summaries make this feedback concrete: an online discussion can be
encountered through a digest that reports a central position, amplifies a
majority, or narrates unequal positions as equally prevalent. Controlled
studies indicate that such summaries can anchor later expression and alter
judgment or participation \citep{govers2026,heseltine2026,shu2026}. The
mathematical problem is broader than AI, however. Related models contain AI
oracles, endogenous aggregators, mediated communication, predictive feedback,
institutional signals, platform-selected information, and hierarchical
coarse-graining \citep{rodrigo2025,li2026,acemoglu2026,tsirtsis2026,wu2026,
segoviamartin2021,candogan2022,widler2026}. Global medians and recommendation
rules have likewise been coupled to collective states
\citep{berenbrink2026,sirbu2019,santos2021}. The distinction studied here is
a distribution-valued endogenous representation processed through a
discontinuous attraction--indifference--repulsion (AIR) response.
Table~\ref{tab:closest-work} states the resulting priority boundary.

\begin{table*}[t]
\centering
\small
\caption{Focused comparison with the closest endogenous-feedback models.
The table separates the feedback representation from the response law and
from the result proved; it is not a general survey of opinion dynamics.}
\label{tab:closest-work}
\begin{tabular}{@{}
  >{\raggedright\arraybackslash}p{0.17\textwidth}
  >{\raggedright\arraybackslash}p{0.20\textwidth}
  >{\raggedright\arraybackslash}p{0.25\textwidth}
  >{\raggedright\arraybackslash}p{0.30\textwidth}@{}}
\toprule
Work & Returned signal & Population response & Principal result closest to the present setting \\
\midrule
\citet{segoviamartin2021} & Reflexive institutional aggregate & Networked preference updating & Convergence effects of proportional-representation feedback \\
\citet{candogan2022} & Platform-selected content & Social learning on stochastic block networks & Large-network reduction to a two-agent system and consensus/disagreement regimes \\
\citet{wu2026} & Co-evolving prediction & Performative predictive loop & Conditions for consensus under prediction--opinion feedback \\
\citet{widler2026} & Hierarchical group and global representations & Community-bounded CODA updating & Multi-level coarse-graining and downward-causation regimes \\
Present work & Endogenous probability measure over represented stances & Discontinuous attraction--indifference--repulsion response & Mode-specific sufficiency inside response cells and bounded-probe identifiability across AIR thresholds \\
\bottomrule
\end{tabular}
\end{table*}

The central question is one of behavioral sufficiency: which properties of a
discussion must a coarse-graining preserve so that every recipient experiences
the same force? Under the linear response
$F(z-x)=z-x$, only the represented mean enters the force, and any
mean-preserving compression is dynamically exact. Attraction--indifference--
repulsion laws instead partition relative opinion into response zones. Moving
even a small amount of mass across a zone boundary can reverse the induced
force. Thus the relevant coarse-graining is not determined by a generic text
similarity or transport score; it depends on the observable through which the
compressed state re-enters the dynamics.

The analysis resolves this question in two regimes. First, inside an arbitrary
open response cell of $K$ weighted blocs, common attraction to the represented
signal makes its mean sufficient for translation and its complete shape
irrelevant to disagreement. The resulting exact reduction is a homogeneously
anchored weighted signed-Laplacian system. The anchored spectral shift is
established network algebra \citep{bronski2014,razaq2025}; the contribution is
deriving that operator from an endogenous distribution-valued channel and
locating representation error exclusively in the translation mode. The
geometry-dependent stability boundary and the neutral-mode criterion for
secular drift follow from this decomposition.

Second, once recipient probes encounter AIR thresholds, moment sufficiency
generally fails and support geometry becomes visible. We show that force
observations on the bounded opinion domain identify generic finite empirical
discussions, establish inverse-stability bounds in total variation and
Wasserstein distance, and derive a nonzero distortion floor for
support-reducing compression. A two-sided decomposition separates transport
error from cumulative mass displaced across response thresholds. These are
bounded-probe inverse results generated by the social response law, related to
finite-rate-of-innovation and sparse-measure recovery
\citep{vetterli2002,candes2014} but with a test class fixed by behavior rather
than chosen for reconstruction.

The two regimes are complementary rather than separate models. The fixed-cell
theorem identifies precisely what is sufficient: represented shape may be
discarded for disagreement, while the represented mean must be retained for
translation. The inverse and distortion results characterize what becomes
observable when that simplification fails. Two-bloc reversal, prevalence-driven translation,
and the contrast between broadcast and personalized summary errors are
consequences used to expose the collective implications. Common-noise and
self-averaging principles themselves are standard
\citep{carmona2018,coculescu2024}; here they diagnose how correlation in the
endogenous representation channel survives or vanishes at population scale.

The model uses projected dynamics on a bounded opinion interval and an
explicit convention at response discontinuities. Every closed-form dynamical
statement is local to a strict response cell and stops at the first threshold
or boundary event unless stated otherwise. The transfer operators are
stylized rather than fitted to a deployed language model or to human response
data. The numerical experiments test the mathematical reductions, connect the
two response regimes, and delimit their robustness; they do not constitute
empirical validation of a particular AI system.

\section{Summary-mediated dynamics}
\label{sec:model}

\subsection{Discussions as measures, summaries as operators}

Let the opinion space be $X=[-1,1]$. A discussion $e$ with stances
$x_j\in X$ is its empirical measure
\begin{equation}
  \mu_e=\frac{1}{|e|}\sum_{j\in e}\delta_{x_j}\in\PP(X).
  \label{eq:empirical}
\end{equation}
Probability-measure descriptions and empirical-measure limits are established
tools in continuous opinion dynamics \citep{como2011,mirtabatabaei2014}. The
measure representation itself is not claimed as new; here it is used so that
the endogenous signal can retain, suppress, or reweight stance distribution.
A deterministic summary operator is a map
\begin{equation}
  \SF_\theta:\PP(X)\longrightarrow\PP(X),
  \qquad Q_e=\SF_\theta[\mu_e],
  \label{eq:summary}
\end{equation}
where $Q_e$ is the distribution of positions perceived in the summary, not a
distribution over strings: a single-stance summary is $Q=\delta_s$, a
``both sides'' narrative a two-point measure, a richer digest a broader
measure. Stances may be latent scores inferred from text, with uncertainty
propagated into $\mu_e$.

Generative outputs are stochastic: a Markov kernel $K_\theta(\mu,\dd Q)$
maps discussions to represented measures. Deterministic operators in the
fidelity results take values in $\PP(X)$. In Section~\ref{sec:stochastic}, a
stance error translates such a measure and may produce an element of
$\PP_1(\mathbb R)$; recipient opinions remain constrained to $X$, and $F$ is
defined on $\mathbb R$. This distinction permits an unbounded Gaussian stress
test without incorrectly treating the translated summary as a measure on
$X$. Because the force is linear in $Q$, its expected one-step value is
generated by the barycenter
$\overline{\SF}_\theta[\mu]=\int Q\,K_\theta(\mu,\dd Q)$. Recursive
trajectories driven by the barycenter and expected trajectories driven by
random $Q$ need not agree.

\subsection{Influence and dynamical distortion}

Let $F(z-x)$ be the response of a recipient at $x$ to represented stance
$z$. For any probability measure $\nu$ on $\mathbb R$ for which the integral
is finite, define
\begin{equation}
  V_F[\nu](x)=\int_{\mathbb R} F(z-x)\,\nu(\dd z)
  \label{eq:force}
\end{equation}
as the induced social force. All deterministic discussion and summary
measures below are supported on $X$; the larger domain is needed only for the
translated stochastic stress test.

\begin{definition}[Dynamical distortion]
For a nonempty recipient set $Y\subseteq X$,
\begin{equation}
  \DF^{Y}(\mu,Q)
  =\sup_{x\in Y}\left|
    \int_X F(z-x)[Q-\mu](\dd z)
  \right|.
  \label{eq:distortion}
\end{equation}
We write $\DF$ when $Y=X$. A summary is dynamically faithful for $(F,Y)$ if
$\DF^{Y}(\mu,\SF[\mu])=0$.
\end{definition}

Equation~\eqref{eq:distortion} is the integral probability metric
\citep{muller1997} generated by the translated family
$\{F(\cdot-x):x\in Y\}$. When that family is measure-determining, $\DF$ is a
metric---in analogy with the injective mean embeddings induced by
characteristic kernels \citep{sriperumbudur2010,muandet2017}---and when it
is not, distinct discussions are dynamically equivalent. The difference from
the machine-learning setting is the origin and role of the test class: $F$
is a behavioral response law fixed by the population rather than a modeling
choice, and the embedding error feeds back into the dynamics that generate
the next input. $\DF$ is an operational error measure for the specified
dynamics, not a replacement for factuality or linguistic quality.

Agents divide a fixed influence budget between direct and represented
exposure, so varying $\lambda$ substitutes between channels without changing
the total interaction scale. Define the unconstrained velocity
\begin{equation}
  B_i(x)=(1-\lambda)V_F[\mu_x](x_i)
  +\lambda V_F[\SF[\mu_x]](x_i),
  \qquad 0\le\lambda\le1.
  \label{eq:raw-velocity}
\end{equation}
The bounded opinion space is enforced by the projected dynamics
\begin{equation}
  \dot x_i=\Pi_{T_X(x_i)}B_i(x),
  \qquad
  \Pi_{T_X(x)}v=
  \begin{cases}
    \max(v,0),&x=-1,\\
    v,&-1<x<1,\\
    \min(v,0),&x=1.
  \end{cases}
  \label{eq:particle}
\end{equation}
Thus outward velocity is suppressed at the boundary while inward velocity is
unchanged. At a discontinuity of the response law we use the Krasovskii
convexification of $B$; equivalently, Eq.~\eqref{eq:particle} is read as the
associated projected differential inclusion. This is a standard convention
for discontinuous bounded-confidence systems \citep{ceragioli2012,altafini2018}.
Every closed-form result below assumes strict response inequalities and stops
before projection activates, so it is independent of the selection on the
switching surface. The full-particle computations use projected Euler steps;
the scalar reduction is checked under temporal refinement. Network structure,
mobility, and heterogeneity are omitted to isolate the summary channel.

Throughout we use the attraction--indifference--repulsion law
\begin{equation}
F(u)=
\begin{cases}
u, & |u|<\epsilon_1,\\
0, & \epsilon_1\le |u|<\epsilon_2,\\
-\eta u, & |u|\ge\epsilon_2,
\end{cases}
\qquad \eta>0,
\label{eq:air}
\end{equation}
with $0<\epsilon_1<\epsilon_2<2$ and jump set
$T=\{\pm\epsilon_1,\pm\epsilon_2\}$, following the social-judgment structure
of assimilation and contrast \citep{jager2005,sabinmiller2020}. The
repulsive branch is a modeling hypothesis, not an empirical universal.

\subsection{Stylized summary operators}
\label{sec:operators}

Mean collapse $\SF_0[\mu]=\delta_{m(\mu)}$, $m(\mu)=\int z\,\mu(\dd z)$,
represents the discussion by one central stance. The contraction family
\begin{equation}
  \SF_c[\mu]=(T_c^\mu)_\#\mu,
  \qquad
  T_c^\mu(z)=m(\mu)+c[z-m(\mu)],
  \label{eq:contraction}
\end{equation}
interpolates to the identity at $c=1$; every $\SF_c$ preserves the mean and
retains a fraction $c^2$ of the variance. For a two-position discussion
$\mu_q=q\delta_{y_+}+(1-q)\delta_{y_-}$ ($y_+>y_-$), prevalence can be
transformed without moving positions:
\begin{align}
  \widehat q_{\rm fb}
  &=\tfrac12+c_q\!\left(q-\tfrac12\right),
  &&0\le c_q<1,
  \label{eq:false-balance}\\
  \widehat q_{\rm maj}
  &=\operatorname{logit}^{-1}
    [a\operatorname{logit}(q)],
  &&a>1,
  \label{eq:majority}
\end{align}
stylizing the narrative and percentage formats of \citet{govers2026}. Both
maps preserve $q=\tfrac12$ and respect the symmetry $q\leftrightarrow1-q$;
the logit map amplifies prevalence odds without introducing a preferred
ideological direction. Consistent with this mechanism, \citet{lei2024} report
that several evaluated
neural opinion summarizers amplify input polarity and under-represent
minority stances, motivating a transfer $q\mapsto\widehat q$ with
$\widehat q>q$ for $q>\tfrac12$. \citet{huang2023} separately show that
semantic similarity does not ensure adequate coverage of opinion diversity.
The composed operator reweights and then contracts about the
\emph{represented} mean $\widehat m=\widehat q\,y_++(1-\widehat q)\,y_-$:
\begin{equation}
  \SF_{c,\widehat q}[\mu_q]
  =\widehat q\,\delta_{\widehat m+c(y_+-\widehat m)}
  +(1-\widehat q)\,\delta_{\widehat m+c(y_--\widehat m)}.
  \label{eq:composed}
\end{equation}
These are low-dimensional transfer functions to be estimated from controlled
inputs and outputs, not asserted universals.

\section{Behavioral sufficiency on fixed response cells}
\label{sec:twobloc}

We focus on common attraction because centralizing compression naturally
enters this branch and there acts as a shared anchor; uniformly indifferent
or repulsive representations do not provide the same stabilizing channel.
Fixing the AIR labels turns direct interaction into a signed network, while
the represented channel determines which summary statistics remain visible.
The result below holds until a response threshold or opinion boundary is
reached.

\subsection{Exact \texorpdfstring{$K$}{K}-bloc response-cell reduction}

Let
$\mu_x=\sum_{i=1}^Kq_i\delta_{x_i}$ with $q_i>0$ and
$\sum_iq_i=1$. An \emph{open response cell} is a connected set of
configurations on which no direct distance equals either threshold and every
pair keeps the same response label. Set $a_{ii}=0$ by convention (the
diagonal is immaterial), and for $i\ne j$ let
\begin{equation}
  a_{ij}=
  \begin{cases}
    1,&|x_j-x_i|<\epsilon_1,\\
    0,&\epsilon_1<|x_j-x_i|<\epsilon_2,\\
    -\eta,&|x_j-x_i|>\epsilon_2.
  \end{cases}
  \label{eq:response-labels}
\end{equation}
Define the weighted signed Laplacian
\begin{equation}
  (L_Av)_i=\sum_{j=1}^Kq_j a_{ij}(v_i-v_j),
  \label{eq:signed-laplacian}
\end{equation}
the true mean $m=q^\top x$, and the centered configuration
$y=x-m\mathbf1$.

\begin{theorem}[Exact response-cell reduction]
\label{thm:spectral}
Suppose the trajectory remains in the interior of $X^K$ and in one open
response cell. Assume also that every represented position attracts every
recipient,
\begin{equation}
  \sup_{z\in\operatorname{supp}\SF[\mu_x]}|z-x_i|<\epsilon_1
  \quad (i=1,\ldots,K),
  \label{eq:uniform-summary-attraction}
\end{equation}
and write $\widehat m(x)=\int z\,\SF[\mu_x](\dd z)$. Then, until the first
cell or boundary exit,
\begin{equation}
  \boxed{\dot m=\lambda(\widehat m-m)},
  \qquad
  \boxed{\dot y=-[(1-\lambda)L_A+\lambda I]y}.
  \label{eq:spectral-decomposition}
\end{equation}
The operator $L_A$ is self-adjoint in
$\langle u,v\rangle_q=\sum_iq_i u_iv_i$. Let $\kappa_{\min}$ be its smallest
eigenvalue on $H_q=\{v:q^\top v=0\}$. If $\kappa_{\min}<0$, the centered
system changes stability at
\begin{equation}
  \boxed{\lambda_*=
  \frac{-\kappa_{\min}}{1-\kappa_{\min}}}.
  \label{eq:spectral-threshold}
\end{equation}
It has an expanding mode for $\lambda<\lambda_*$, a neutral eigenspace for
$\lambda=\lambda_*$, and is exponentially stable for
$\lambda>\lambda_*$. If $\kappa_{\min}\ge0$, it is exponentially stable for
every $\lambda>0$.
\end{theorem}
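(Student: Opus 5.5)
The plan is to compute the raw velocity $B_i$ explicitly inside the cell, then split it into its $q$-weighted mean and its centered part. Because the trajectory stays in the interior of $X^K$, projection is inactive and $\dot x_i=B_i(x)$. Since no direct distance sits on a threshold, the Krasovskii convexification coincides with $B$ on the open cell, so the ODE is classical there.

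\textbf{Step 1: the two forces.} On the cell the direct force is
\[
V_F[\mu_x](x_i)=\sum_j q_j F(x_j-x_i)=\sum_j q_j a_{ij}(x_j-x_i)=-(L_Ax)_i .
\]
This uses $F(u)=a\,u$ with the constant label $a\in\{1,0,-\eta\}$. The diagonal term vanishes because $x_i-x_i=0$. By \eqref{eq:uniform-summary-attraction}, every represented stance lies in the linear branch, so
\[
V_F[\SF[\mu_x]](x_i)=\int (z-x_i)\,\SF[\mu_x](\dd z)=\widehat m-x_i .
\]
This is the step where the represented shape drops out: only the mean $\widehat m$ survives. Hence
\[
\dot x=-(1-\lambda)L_Ax+\lambda(\widehat m\mathbf 1-x).
\]

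\textbf{Step 2: the decomposition.} The symmetry $a_{ij}=a_{ji}$ gives
\[
q^\top L_Av=\sum_{i,j}q_iq_ja_{ij}(v_i-v_j)=0
\]
by antisymmetry of the summand. It also gives $L_A\mathbf 1=0$. Taking $q^\top$ of the ODE yields $\dot m=\lambda(\widehat m-m)$. Subtracting $\dot m\mathbf 1$ and using $L_Ax=L_Ay$ yields the boxed equation for $y$, in which $\widehat m$ cancels exactly.

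\textbf{Step 3: self-adjointness.} Compute
\[
\langle u,L_Av\rangle_q=\tfrac12\sum_{i,j}q_iq_ja_{ij}(u_i-u_j)(v_i-v_j),
\]
which is symmetric in $u$ and $v$. $L_A$ leaves $H_q$ invariant by Step 2, and $H_q$ is the $q$-orthogonal complement of $\mathbf 1$. So $L_A|_{H_q}$ has a $q$-orthonormal eigenbasis with real eigenvalues $\kappa$. The centered dynamics decouple into scalar modes with rates
\[
-r(\kappa)=-[(1-\lambda)\kappa+\lambda].
\]

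\textbf{Step 4: stability.} The boundary is governed by $r_{\min}(\lambda)=(1-\lambda)\kappa_{\min}+\lambda$, which is affine in $\lambda$. If $\kappa_{\min}<0$, it vanishes exactly at $\lambda_*=-\kappa_{\min}/(1-\kappa_{\min})\in(0,1)$. It is negative below $\lambda_*$, which gives an expanding mode. At $\lambda_*$ the $\kappa_{\min}$-eigenspace is neutral. Above $\lambda_*$ every mode has a positive rate, because $r$ is increasing in $\kappa$ for $\lambda<1$ and equals $1$ at $\lambda=1$; this gives exponential stability. If $\kappa_{\min}\ge0$, then $r\ge\lambda>0$ for all modes.

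The main obstacle is not the algebra but rigor about the phrase "until first exit". One must check that labels, and hence $L_A$, are constant on the open cell. One must also check that the summary-attraction hypothesis must hold along the trajectory: it is assumed pointwise, and I will read it as holding on the whole time interval. Then the linear ODE holds on the maximal interval, with the stability statements understood as properties of the frozen linear system restricted to that interval.
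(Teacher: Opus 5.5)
Your proposal is correct and is essentially the paper's own proof. You identify the direct force as $-(L_Ax)_i$ and use the attraction condition to collapse the summary force to $\widehat m-x_i$; then $q^\top L_A=0$ gives the mean equation, subtracting $\dot m\mathbf 1$ gives the centered one, and the threshold follows from the modal rates $-(1-\lambda)\kappa-\lambda$ of the $q$-self-adjoint operator on $H_q$. Your explicit sign analysis (checking $\lambda_*\in(0,1)$ and handling $\lambda=1$, where all rates coincide) spells out what the paper compresses into ``the largest rate corresponds to $\kappa_{\min}$.''
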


\begin{proof}
Inside the cell, direct response satisfies
$V_F[\mu_x](x_i)=-(L_Ax)_i$. Condition
\eqref{eq:uniform-summary-attraction} gives
$V_F[\SF[\mu_x]](x_i)=\widehat m-x_i$, independently of the represented
measure's shape. Symmetry of $a_{ij}$ yields $q^\top L_A=0$, which gives the
mean equation. Subtracting $\dot m\mathbf1$ from the vector equation gives
the centered equation. Moreover,
\begin{equation*}
  \langle u,L_Av\rangle_q
  =\frac12\sum_{i,j}q_iq_j a_{ij}
  (u_i-u_j)(v_i-v_j),
\end{equation*}
so $L_A$ is self-adjoint. A disagreement eigenmode with eigenvalue $\kappa$
has growth rate $\rho(\kappa,\lambda)=-(1-\lambda)\kappa-\lambda$.
The largest rate corresponds to $\kappa_{\min}$ and crosses zero at
Eq.~\eqref{eq:spectral-threshold}.
\end{proof}

Signed-Laplacian spectra and homogeneous anchoring are established tools for
attractive--repulsive networks \citep{bronski2014}. Indeed, the centered
operator in \eqref{eq:spectral-decomposition} is algebraically identical to
the homogeneous transient operator of a continuous-time signed
Friedkin--Johnsen model with equal stubbornness coefficients
\citep{taylor1968,friedkin1990,razaq2025}. The models differ in the source of
that anchor: fixed individual prejudices drive Friedkin--Johnsen dynamics,
whereas the common anchor here is recomputed from the population. Accordingly,
the eigenvalue shift is not claimed as new. Theorem~\ref{thm:spectral}
establishes the preceding model reduction: an endogenous distribution-valued
channel becomes shape-independent in disagreement coordinates, with its mean
retained only in translation.

The threshold is therefore geometric rather than universal. In the three-bloc
example used below, one attractive, one indifferent, and one repulsive pair
give disagreement eigenvalues $-0.221576$ and $0.541576$, hence
$\lambda_*=0.181385$ rather than the complete-repulsion value $2/7$. The
mean-collapse summary satisfies the additional common-attraction condition;
an arbitrary summary need not.

\begin{corollary}[Marginal-mode bias projection]
\label{cor:critical-projection}
Under the hypotheses of Theorem~\ref{thm:spectral}, suppose
$\kappa_{\min}<0$ and the represented mean is
$\widehat m=\widehat q^\top x$ for fixed normalized weights
$\widehat q_i\ge0$ with $\sum_i\widehat q_i=1$, and put
$d=\widehat q-q$. At $\lambda=\lambda_*$, let $P_*$ be the weighted
orthogonal projection onto the $\kappa_{\min}$ eigenspace and let $P_\ell$
project onto each remaining disagreement eigenspace, whose growth rates
$\rho_\ell$ are negative. Then
\begin{equation}
  m(t)=m_0+\lambda_*t\,d^\top P_*y_0
  +\lambda_*\sum_{\ell:\rho_\ell<0}
  \frac{e^{\rho_\ell t}-1}{\rho_\ell}\,d^\top P_\ell y_0
  \label{eq:critical-projection}
\end{equation}
until the first cell or boundary exit. Secular drift occurs if and only if
$d^\top P_*y_0\ne0$.
\end{corollary}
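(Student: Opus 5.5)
Theorem~\ref{thm:spectral} already decouples the dynamics, so the plan is to integrate the translation equation $\dot m=\lambda_*(\widehat m-m)$ after expressing its forcing entirely through the centered configuration, and then to use the explicit spectral solution of the centered system. Writing $x=m\mathbf1+y$ and using $\sum_i\widehat q_i=\sum_iq_i=1$, we get $\widehat m-m=\widehat q^\top x-q^\top x=d^\top x=d^\top(m\mathbf1+y)=d^\top y$, since $d^\top\mathbf1=0$. Hence $\dot m=\lambda_*\,d^\top y(t)$, and the whole task reduces to computing $y(t)$ and integrating a scalar linear functional of it. The represented shape never enters, only the weights $\widehat q$ through $d$.

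Next I will solve the centered equation at $\lambda=\lambda_*$. Because $L_A$ is self-adjoint in $\langle\cdot,\cdot\rangle_q$ and maps $H_q$ to itself ($q^\top L_A=0$), $H_q$ has a $q$-orthogonal eigendecomposition, and $y_0\in H_q$ splits as $y_0=P_*y_0+\sum_\ell P_\ell y_0$ using weighted orthogonal projections. The operator $-[(1-\lambda_*)L_A+\lambda_*I]$ acts on the $\kappa$-eigenspace as the scalar $\rho(\kappa,\lambda_*)=-(1-\lambda_*)\kappa-\lambda_*$. This rate vanishes exactly for $\kappa=\kappa_{\min}$ by Eq.~\eqref{eq:spectral-threshold}, and it is negative for every larger eigenvalue because $\rho$ is strictly decreasing in $\kappa$ (as $1-\lambda_*>0$). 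Therefore $y(t)=P_*y_0+\sum_\ell e^{\rho_\ell t}P_\ell y_0$. Substituting this and integrating from $0$ to $t$ gives $\lambda_*t\,d^\top P_*y_0$ from the neutral part and $\lambda_*(e^{\rho_\ell t}-1)/\rho_\ell\,d^\top P_\ell y_0$ from each decaying part, which is Eq.~\eqref{eq:critical-projection}. All of this is valid only until the first cell or boundary exit, inherited from Theorem~\ref{thm:spectral}.

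For the equivalence, note that the decaying terms are bounded uniformly in $t$, each by $\lambda_*|d^\top P_\ell y_0|/|\rho_\ell|$. Hence $m(t)-m_0$ grows linearly (secular drift) if and only if the coefficient $d^\top P_*y_0$ is nonzero. The only point needing care is making ``secular drift'' precise: I would define it as unbounded linear growth of $m$ on the formal solution, or equivalently a nonzero limit of $(m(t)-m_0)/t$. I will also remark that the cell-exit caveat means this is a statement about the reduced flow. A minor subtlety is that $d^\top P_*y_0$ uses the Euclidean pairing with $d$ rather than the $q$-inner product; this is harmless, since it is just a linear functional evaluated on the explicit solution.
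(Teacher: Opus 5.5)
Your proposal is correct and follows the paper's proof: you use $\mathbf 1^\top d=0$ to reduce the mean equation to $\dot m=\lambda_* d^\top y$, expand the centered solution in the $q$-orthogonal eigenprojectors, and integrate each mode. Your extra details make explicit what the paper leaves implicit. These are that $\rho$ is strictly decreasing in $\kappa$ because $1-\lambda_*>0$, and that the decaying terms are uniformly bounded, so the linear coefficient alone decides secular drift.
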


\begin{proof}
Since $\mathbf1^\top d=0$, the mean equation is
$\dot m=\lambda_*d^\top y$. Expand the centered solution of
Eq.~\eqref{eq:spectral-decomposition} in the weighted orthogonal
eigenprojectors and integrate each exponential mode.
\end{proof}

\subsection{Two-bloc specialization: mean-faithful reversal}

The two-bloc case makes the physical content of the reduction transparent. It
is a specialization, not a distinct spectral mechanism: a representation can
preserve the statistic sufficient for linear response while moving the system
to a different AIR branch.

For two equal blocs $\mu_y=\tfrac12\delta_{-y}+\tfrac12\delta_y$ and the
contraction summary $\SF_c$, symmetry preserves the two-bloc form and the
positive bloc obeys
\begin{align}
\dot y={}&\frac{1-\lambda}{2}F(-2y)
+\frac{\lambda}{2}
\left\{F[-(1+c)y]+F[-(1-c)y]\right\}.
\label{eq:twobloc-general}
\end{align}

\begin{corollary}[Two-bloc response-cell reversal]
\label{prop:reversal}
If $2y>\epsilon_2$ and $(1+c)y<\epsilon_1$ (the \emph{mixed-response
wedge}, nonempty iff $c<2\epsilon_1/\epsilon_2-1$), then
\begin{equation}
  \dot y=\left[(1-\lambda)\eta-\lambda\right]y,
  \qquad
  \boxed{\lambda_c=\frac{\eta}{1+\eta}}.
  \label{eq:critical}
\end{equation}
\end{corollary}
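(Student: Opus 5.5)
Inside the mixed-response wedge, every argument of $F$ in Eq.~\eqref{eq:twobloc-general} lies on a fixed branch of the AIR law~\eqref{eq:air}; once the branches are fixed, the reversal is simple arithmetic. We take $y>0$ and $0\le c<1$ as in the contraction family. The direct term has argument $-2y$ with $|{-2y}|=2y>\epsilon_2$, so it is repulsive: $F(-2y)=-\eta(-2y)=2\eta y$. The represented stances are $\pm cy$. Relative to the recipient at $y$ they sit at displacements $-(1-c)y$ and $-(1+c)y$. Since $0\le 1-c\le 1+c$, the condition $(1+c)y<\epsilon_1$ places both displacements strictly inside the attractive branch. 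Hence $F[-(1\pm c)y]=-(1\pm c)y$, and their sum is $-2y$.

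Substituting these values into Eq.~\eqref{eq:twobloc-general} gives
\begin{equation*}
\dot y=\tfrac{1-\lambda}{2}\,2\eta y+\tfrac{\lambda}{2}(-2y)=[(1-\lambda)\eta-\lambda]\,y .
\end{equation*}
The coefficient vanishes exactly when $\eta=\lambda(1+\eta)$, that is, at $\lambda_c=\eta/(1+\eta)$. It is positive below this value, so the blocs separate, and negative above it, so they contract. This is the $K=2$ instance of Theorem~\ref{thm:spectral}. With $q=(\tfrac12,\tfrac12)$ and $a_{12}=-\eta$, the disagreement eigenvalue is $\kappa_{\min}=-\eta$, and Eq.~\eqref{eq:spectral-threshold} returns the same $\lambda_c$. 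The uniform-attraction hypothesis~\eqref{eq:uniform-summary-attraction} is exactly $(1+c)y<\epsilon_1$, and $m=0$ is preserved by symmetry. I would note this agreement as a consistency check rather than rely on it.

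For the nonemptiness claim, the wedge is the set of $y$ with $\epsilon_2/2<y<\epsilon_1/(1+c)$. This interval is nonempty iff $\epsilon_2(1+c)<2\epsilon_1$, i.e.\ $c<2\epsilon_1/\epsilon_2-1$.

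The only delicate point is the validity window. As in the theorem, the formula holds only while the trajectory stays inside the open wedge. Above $\lambda_c$, $y$ decreases until $2y$ reaches $\epsilon_2$. Below $\lambda_c$, $y$ grows until $(1+c)y$ reaches $\epsilon_1$ or the boundary $y=1$ is hit. The strict inequalities keep us off the jump set $T$, so the Krasovskii convention is never invoked. I would state the result as holding until the first such exit.
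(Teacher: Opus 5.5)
Your proposal is correct and follows essentially the same route as the paper's proof: you place the direct term on the repulsive branch and both contracted support points on the attractive branch, then read off the coefficient $(1-\lambda)\eta-\lambda$. Your explicit check of the wedge's nonemptiness, the $K=2$ consistency check against Theorem~\ref{thm:spectral}, and the statement of the exit times are correct additions that the paper's proof leaves implicit.
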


\begin{proof}
Direct exposure sees the opposite bloc in the repulsive zone:
$\tfrac{1-\lambda}{2}F(-2y)=(1-\lambda)\eta y$. Both contracted support
points lie in the attractive zone, contributing mean force $-\lambda y$.
\end{proof}

Throughout the trajectory $m(\SF_c[\mu_y])=m(\mu_y)=0$: the summary is
exactly unbiased in the statistic sufficient for linear dynamics, yet it
hides the cross-bloc distance that activates repulsion and substitutes
attraction toward the center. Corollary~\ref{prop:reversal} is a
\emph{local} instability-reversal statement: within the wedge the state
grows exponentially below $\lambda_c$ and contracts above it. Once the
trajectory leaves the wedge, other branches of
Eq.~\eqref{eq:twobloc-general} activate; the eventual destinations observed
numerically---boundary polarization below $\lambda_c$ and consensus above
it, with a $c$-dependent outcome boundary---are properties of the selected
global branches (Fig.~\ref{fig:recursive}), not consequences of the
proposition, and a complete branch enumeration with a solution concept for
the discontinuous measure flow remains open.

\begin{figure*}[t]
  \centering
  \includegraphics[width=0.80\textwidth]{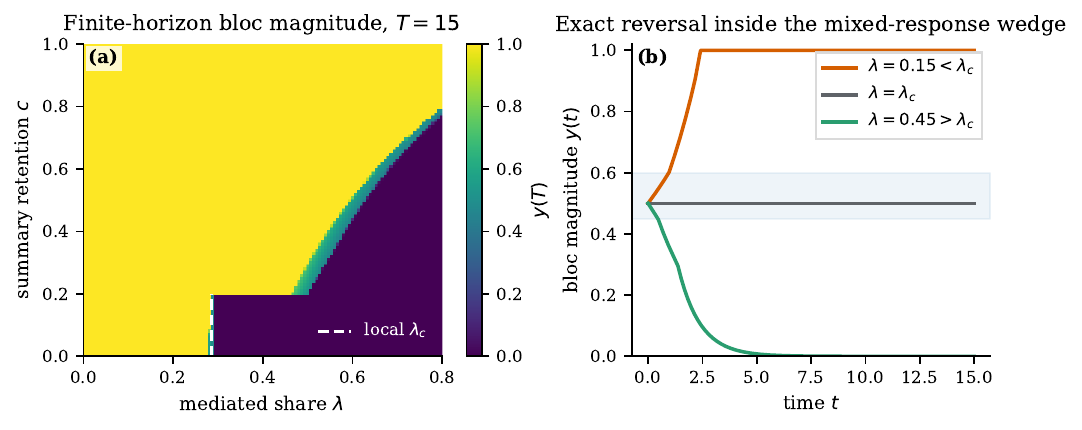}
  \caption{Projected symmetric two-bloc dynamics for $y_0=0.5$ and
  $(\epsilon_1,\epsilon_2,\eta)=(0.6,0.9,0.4)$.
  (a) Bloc magnitude at $T=15$ across $(\lambda,c)$ using projected Euler
  step $0.005$; the dashed segment marks the response-cell threshold.
  (b) Reduced trajectories at $c=0$ below, at, and above
  $\lambda_c=2/7$.}
  \label{fig:recursive}
\end{figure*}

A central attractor competing with threshold-limited peer influence also
arises when a bounded-confidence population receives a \emph{fixed} moderate
broadcast \citep{mckeown2006,vazmartins2010,pansanella2023}, and cluster
drift toward fixed media positions in the Deffuant--Weisbuch model has
recently been analyzed in detail \citep{zheng2026}. The distinction is that
here the attractor is computed from the discussion by a mean-preserving
operator: the effect survives any audit that checks the statistic
sufficient for linear dynamics, and the drift law below is not migration
toward an exogenous position but a gap-invariant translation driven by the
operator's error in representing the evolving prevalence---a mechanism with
no fixed-broadcast analogue.

\subsection{Two-bloc specialization: prevalence-induced translation}

The operators \eqref{eq:false-balance}--\eqref{eq:majority} act on
prevalence, which the symmetric state cannot probe. Let
$\mu=q\delta_{y_+}+(1-q)\delta_{y_-}$ with gap $\Delta=y_+-y_->0$ and true
mean $m=qy_++(1-q)y_-$, summarized by $\SF_{c,\widehat q}$ of
Eq.~\eqref{eq:composed}.

\begin{proposition}[Gap invariance and mean drift]
\label{prop:drift}
Suppose the mixed wedge conditions hold:
\begin{equation}
  \Delta>\epsilon_2,
  \quad
  \big[\max(\widehat q,1-\widehat q)
  +c\min(\widehat q,1-\widehat q)\big]\Delta<\epsilon_1.
  \label{eq:asym-wedge}
\end{equation}
Then, for all $q,\widehat q\in(0,1)$ and $c\in[0,1]$,
\begin{equation}
  \dot\Delta=\big[(1-\lambda)\eta-\lambda\big]\Delta,
  \qquad
  \boxed{\dot m=\lambda\,(\widehat q-q)\,\Delta.}
  \label{eq:drift-laws}
\end{equation}
Within the wedge, the reversal threshold $\lambda_c=\eta/(1+\eta)$ is invariant
under prevalence distortion and support contraction, while any
misrepresentation $\widehat q\ne q$ drives a drift of the entire
configuration; at $\lambda=\lambda_c$ the gap freezes, the wedge conditions
persist, and $m(t)=m(0)+\lambda_c(\widehat q-q)\Delta_0\,t$ exactly until a
bloc reaches the boundary.
\end{proposition}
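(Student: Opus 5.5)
The plan is to compute the two bloc velocities directly from Eqs.~\eqref{eq:raw-velocity} and \eqref{eq:air} inside the wedge, and then form the mean and gap equations. This is the $K=2$ case of Theorem~\ref{thm:spectral} with a repulsive direct label. I will nonetheless compute by hand so that the wedge condition \eqref{eq:asym-wedge} can be checked to imply \eqref{eq:uniform-summary-attraction}.

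\textbf{Direct channel.} Since $\Delta>\epsilon_2$, each bloc sees the other in the repulsive zone, and self-interaction contributes $F(0)=0$. Hence
\[
V_F[\mu](y_+)=(1-q)\eta\Delta,
\qquad
V_F[\mu](y_-)=-q\eta\Delta .
\]

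\textbf{Represented channel.} The represented support points are $s_\pm=\widehat m+c(y_\pm-\widehat m)$, with $\widehat m=\widehat q y_++(1-\widehat q)y_-$. Then
\[
y_+-\widehat m=(1-\widehat q)\Delta,
\qquad
\widehat m-y_-=\widehat q\Delta .
\]
The four recipient--support distances are therefore
\begin{align*}
|s_+-y_+|&=(1-c)(1-\widehat q)\Delta, & |s_--y_+|&=(1-\widehat q+c\widehat q)\Delta,\\
|s_--y_-|&=(1-c)\widehat q\Delta, & |s_+-y_-|&=(\widehat q+c(1-\widehat q))\Delta .
\end{align*}
The key check is that the largest of these is
\[
\big[\max(\widehat q,1-\widehat q)+c\min(\widehat q,1-\widehat q)\big]\Delta .
\]
To see this, note $1-\widehat q+c\widehat q$ versus $\widehat q+c(1-\widehat q)$ differ by $(1-c)(1-2\widehat q)$, so the larger one pairs the larger weight with coefficient $1$. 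The two same-side distances are each dominated by these. So the second wedge inequality places every represented stance in the attractive zone for both recipients. The represented force is then linear, $V_F[\SF](y_\pm)=\widehat m-y_\pm$, independent of $c$. This gives
\[
V_F[\SF](y_+)=-(1-\widehat q)\Delta,
\qquad
V_F[\SF](y_-)=\widehat q\Delta .
\]
Projection is inactive in the interior.

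\textbf{Combining.} Subtracting the two velocities gives
\[
\dot\Delta=(1-\lambda)\eta\Delta-\lambda\Delta .
\]
Weighting by $q$ and $1-q$, the direct terms cancel:
\[
q(1-q)\eta\Delta-(1-q)q\eta\Delta=0 .
\]
The represented terms give
\[
\lambda\big[-q(1-\widehat q)+(1-q)\widehat q\big]\Delta=\lambda(\widehat q-q)\Delta .
\]
This establishes \eqref{eq:drift-laws}. The rate in $\dot\Delta$ vanishes exactly at $\lambda_c=\eta/(1+\eta)$, and it contains neither $q$, $\widehat q$, nor $c$. This is the claimed invariance.

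\textbf{Persistence at $\lambda_c$.} Prevalences are static, so $\widehat q$ is constant along the flow. This holds for the false-balance and majority transfers, which depend only on $q$. At $\lambda=\lambda_c$ the gap $\Delta$ is constant, and the wedge conditions depend only on $(\Delta,\widehat q,c)$. They therefore persist for all time. The blocs translate rigidly at the constant speed $\lambda_c(\widehat q-q)\Delta_0$, which gives the linear law. The only possible exit is a bloc hitting $\pm1$, where projection activates.

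The main, modest, obstacle is the case analysis identifying the maximal distance. One must also remark that strictness of the inequalities keeps the flow off the discontinuity set, so the Krasovskii convention is irrelevant.
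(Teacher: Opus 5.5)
Your proposal is correct and follows essentially the same route as the paper. Both compute the two bloc velocities inside the wedge, take their difference for the gap law and the $q$-weighted sum for the drift law, and use constancy of $(\Delta,\widehat q,c)$ on the frozen-gap trajectory at $\lambda_c$ for persistence. Your explicit check that the second wedge inequality is exactly the largest of the four recipient--support distances supplies a verification that the paper only asserts.
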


\begin{proof}
In the wedge the direct forces are $(1-q)\eta\Delta$ and $-q\eta\Delta$. All
represented positions lie in the attractive zone of both blocs, so the
summary force on a bloc at $y$ is $\widehat m-y$, and contraction about
$\widehat m$ preserves $\widehat m$ for every $c$. Hence
$\dot y_+=(1-\lambda)(1-q)\eta\Delta-\lambda(1-\widehat q)\Delta$ and
$\dot y_-=-(1-\lambda)q\eta\Delta+\lambda\widehat q\Delta$. The difference
cancels all $(q,\widehat q,c)$ dependence; the weighted sum
$q\dot y_++(1-q)\dot y_-$ cancels the direct terms and leaves
$\lambda(\widehat q-q)\Delta$. The wedge conditions depend only on
$(\Delta,\widehat q,c)$, all constant on the frozen-gap trajectory.
\end{proof}

For $q>\tfrac12$, Eq.~\eqref{eq:drift-laws} immediately separates the two
summary formats. False balance gives
$\dot m=-\lambda(1-c_q)(q-\tfrac12)\Delta<0$, a drift toward the
\emph{minority}; majority amplification gives $\widehat q>q$ and a drift
toward the \emph{majority}. Neither format moves $\lambda_c$.

In coordinates $(\Delta,m)$ the wedge dynamics are triangular with rates
$\sigma_\Delta=(1-\lambda)\eta-\lambda$ and $\sigma_m=0$: the disagreement
mode is relevant or irrelevant depending on the mediated share, while the
translation mode is marginal and \emph{driven} at a rate proportional to the
operator's prevalence error. The drift law is an exact identity within the
fixed-cell model. In a controlled experiment, with $\lambda$ and
the bloc gap known and other common drift sources excluded, a measured
collective mean drift provides a model-based estimator of the prevalence
error $\widehat q-q$. Identification from observational field data requires
additional assumptions or instruments.

\subsection{Correlation architecture of stochastic summaries}
\label{sec:stochastic}

To isolate error in represented location from error in represented shape, we
model each regeneration error as a translation; cross-recipient correlation
can then vary while the one-output marginal error model is held fixed. The
common-noise principle is standard; the calculation below identifies its
coefficient and validity conditions for the endogenous representation channel.
Let a new stance error
be drawn every $h$ units of time and held until the next regeneration, while
the base operator continues to track the evolving discussion. During refresh
interval $k$, its output is translated by an independent, mean-zero error
$\xi_k$ with variance $\sigma_s^2$. For the exact calculation below, the
errors are bounded so that all summary--recipient displacements remain
strictly in the attraction zone and projection is inactive. Gaussian errors,
which can violate these assumptions and can translate the represented support
outside $X$, are used only as a separately labeled stress test.

\begin{proposition}[Deployment dichotomy]
\label{prop:deployment}
Consider a horizon $T$ that is a multiple of $h$. Condition on the initial
empirical state, take errors to be independent across refresh intervals and,
in the personalized case, across recipients, and assume throughout that every
represented position remains in the recipient's attraction zone and
projection is inactive.

(i) \emph{Shared deployment} (one summary per refresh, broadcast to all $N$
agents). For two point blocs in the mixed wedge, shared translation leaves the
gap equation unchanged and the population mean obeys
\begin{equation}
  \dot m=\lambda\big[(\widehat q-q)\Delta_t+\xi(t)\big],
  \label{eq:common-noise}
\end{equation}
where $\xi(t)$ is piecewise constant. Because $\Delta_t$ is independent of
the common translation error,
$\operatorname{Var}[m(T)-m_{\rm det}(T)\mid x(0)]
=\lambda^2\sigma_s^2hT$, independent of $N$.

(ii) \emph{Personalized deployment} (each agent independently samples its
own displacement $\xi_k^{(i)}$). If the base summary operator preserves the
current empirical mean, then, even though personalized errors broaden the
population,
\begin{equation}
  \dot m_N=\lambda\overline\xi(t),
  \qquad
  \operatorname{Var}[m_N(T)-m_N(0)\mid x(0)]
  =\frac{\lambda^2\sigma_s^2hT}{N}.
  \label{eq:personalized-noise}
\end{equation}
For a prevalence-distorting personalized operator the $1/N$ self-averaging
order remains plausible, but Eq.~\eqref{eq:personalized-noise}'s coefficient
is not asserted: random broadening also randomizes the systematic drift.
\end{proposition}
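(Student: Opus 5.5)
The plan is to reduce both parts to the attraction-zone identity already used in Theorem~\ref{thm:spectral}. Whenever every represented position lies in a recipient's attraction zone, the summary force is linear in the represented measure. Translating that measure by $\xi$ therefore changes the force by exactly $\xi$, independently of its shape. Under the standing assumptions (bounded errors, attraction zone, projection inactive), the translated summary contributes $\widehat m+\xi(t)-x_i$ to recipient $i$.

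\textbf{Part (i).} I would redo the computation in the proof of Proposition~\ref{prop:drift} with $\widehat m$ replaced by $\widehat m+\xi(t)$.
\begin{itemize}
\item Both blocs receive the same additive term $\lambda\xi(t)$, so it cancels in $\dot\Delta$. The gap therefore obeys the deterministic law $\dot\Delta=[(1-\lambda)\eta-\lambda]\Delta$, and $\Delta_t$ is a deterministic function of $x(0)$.
\item The weighted sum $q\dot y_++(1-q)\dot y_-$ picks up $\lambda\xi(t)$, which gives Eq.~\eqref{eq:common-noise}. Here $q$ is constant because bloc membership does not change.
\item Subtracting the deterministic solution leaves $m(T)-m_{\rm det}(T)=\lambda\int_0^T\xi(t)\,\dd t=\lambda h\sum_{k=1}^{T/h}\xi_k$.
\item By independence, the variance is $\lambda^2h^2(T/h)\sigma_s^2=\lambda^2\sigma_s^2hT$, with no dependence on $N$.
\end{itemize}
One point needs care: the mixed wedge conditions involve $\widehat q$, $c$ and $\Delta$, but not the translation error. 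They continue to hold only if the bounded errors keep represented positions attractive, which is assumed. The boundary-exit issue is excluded by the hypothesis that projection is inactive.

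\textbf{Part (ii).} For $N$ agents, agent $i$ moves by $\dot x_i=(1-\lambda)V_F[\mu_x](x_i)+\lambda(\widehat m+\xi^{(i)}(t)-x_i)$.
\begin{itemize}
\item Averaging over agents, the direct term sums to zero. This is the antisymmetry $F(-u)=-F(u)$, i.e.\ $q^\top L_A=0$ as in Theorem~\ref{thm:spectral}, which still holds with equal weights $1/N$ on arbitrary, now broadened, configurations.
\item By mean preservation, $\widehat m=m_N$, so the attraction term averages to $\lambda\overline\xi(t)$. This gives Eq.~\eqref{eq:personalized-noise}.
\item Integrating, $m_N(T)-m_N(0)=\lambda h\sum_k\overline\xi_k$, where $\operatorname{Var}\overline\xi_k=\sigma_s^2/N$ by independence across recipients.
\end{itemize}

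The main obstacle I expect is justifying the cancellation of the direct term when the configuration is no longer two-bloc. Personalized errors split blocs, so different pairs may carry different response labels. Antisymmetry of $F$ handles each pair, including pairs exactly on a threshold under the Krasovskii convention, provided a symmetric selection is used. I would note that the statement assumes this regime. The conditioning on $x(0)$ and the independence of the $\xi$'s from the state suffice. No independence between $\Delta_t$ and $\xi$ is needed beyond $\Delta_t$ being deterministic, and in part (ii) the mean increment does not depend on the state at all.
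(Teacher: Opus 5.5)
Your proposal is correct and follows the paper's proof. On the attraction branch, a translation by $\xi$ adds $\lambda\xi$ to every recipient velocity, and the common increment cancels from the gap. In the personalized case, oddness of $F$ together with mean preservation leaves exactly $\lambda\overline\xi$ in the mean. Independence across the $T/h$ refresh intervals, and across recipients, then gives both variance formulas.

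Your symmetric-selection proviso at thresholds is unnecessary. Every value of $B$ near the configuration already has population mean $\lambda\overline\xi$. That affine constraint survives the closed convex hull defining the Krasovskii set, so it holds for any Krasovskii selection.
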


\begin{proof}
On the attraction branch, translating a represented measure by $\xi$ adds
$\lambda\xi$ to the recipient velocity. A common increment cancels from all
pairwise differences, so the two-bloc gap remains deterministic. For a
mean-preserving personalized operator, oddness cancels the direct pair forces
in the empirical mean and the centered attractive forces also sum to zero,
leaving only $\lambda\overline\xi$. Each refresh interval therefore contributes
$\lambda h\xi_k$ or $\lambda h\overline\xi_k$ to the corresponding mean.
Independence across the $T/h$ intervals gives the two variance formulas.
\end{proof}

For a general shared error process with covariance $C_\xi(t,s)$,
Eq.~\eqref{eq:common-noise} gives variance
$\lambda^2\int_0^T\!\!\int_0^T C_\xi(t,s)\,\dd s\,\dd t$; a mean-preserving
personalized operator gives $1/N$ of it. The phase-aligned block process used
here yields the displayed $\lambda^2\sigma_s^2hT$, while its stationary
random-phase version gives $\lambda^2\sigma_s^2(hT-h^2/3)$ for $T\ge h$.
If instead the entire represented measure is frozen between refreshes, the
consensus-state gain per refresh is $(1-e^{-\lambda h})\xi_k$, with Euler gain
$1-(1-\lambda\delta t)^{h/\delta t}$. The numerical controls distinguish
these refresh semantics.

The dichotomy suggests a mean-field interpretation: a shared sequence should
produce a stochastic conditional measure flow, whereas personalized errors
should average into a deterministic barycentric operator. This is the
common-noise structure of conditional McKean--Vlasov dynamics
\citep{carmona2018}, and the same
common-versus-independent sampling contrast studied for binary opinions with
influencer noise by \citet{coculescu2024}. We have not proved either
mean-field limit for the discontinuous endogenous operator, and do not use
that interpretation as a theorem.

Proposition~\ref{prop:deployment} excludes errors beyond the attraction-cell
tolerance. At $\lambda=\lambda_c$, such exceedances can switch one bloc's
summary response to another branch and perturb the otherwise neutral gap. We
therefore record the first exit from
$[\epsilon_2,2\epsilon_1]$. In the symmetric parameterization used below,
branch enumeration makes these kicks one-sided, so the observed event is an
upper first exit rather than a demonstrated eventual destination. A bounded
variance-matched control never exceeds the tolerance and consequently has no
exit; Gaussian conditions differ through their exceedance probabilities, not
through variance alone. We report paired temporal refinement of the survival
curves and make no post-exit polarization claim.

All conclusions in this section exploit a common response branch. Their
boundary of validity motivates the inverse question: once represented mass
crosses a threshold, which distributional changes can the population detect?
The next section answers that question using the same force observable.

\section{Beyond response cells: identifiability and fidelity}
\label{sec:sufficiency}

The fixed-cell theorem gives a local sufficiency statement: while every
represented stance remains on the same linear attraction branch, represented
shape is invisible to disagreement. We now ask what the same recipients can
distinguish when their probes encounter AIR thresholds. The change is sharp.
Thresholds cast translated ``shadows'' of support points into the observable
force field, so the response law that admits moment reduction within a cell
can identify the underlying empirical measure across cells. The inverse
results below therefore delimit the exact reduction of
Section~\ref{sec:twobloc}; they are not a separate reconstruction model.

\subsection{From moment sufficiency to bounded probes}

The response law fixes the sufficient statistics. For affine
$F(u)=au+b$ with $a\ne0$, equality $V_F[\mu]=V_F[Q]$ at one recipient
position is equivalent to equality of the represented means and therefore
holds at every position. More generally, if $F$ is a polynomial of exact
degree $r\ge1$, equality of the two force fields on a nonempty open interval
is equivalent to equality of the first $r$ moments. This follows directly by
expanding $(z-x)^k$: the force difference is a polynomial in $x$ whose
coefficients are triangular combinations of the moment differences. These
elementary cases provide the reference point for the thresholded law.

At the opposite extreme, if fidelity is demanded for \emph{every} probe on
the real line, classical Fourier analysis applies: for nonzero
$F\in L^1(\mathbb R)$ and compactly supported $\mu,Q$, the condition
$V_F[\mu]=V_F[Q]$ on all of $\mathbb R$ reads $G*(\mu-Q)=0$ with
$G(u)=F(-u)$, and entire-function continuation of the Fourier--Stieltjes
transform forces $\mu=Q$ \citep{folland1999}; this is the single-kernel
analogue of the Fourier-support conditions that make translation-invariant
kernels measure-determining \citep{sriperumbudur2010}. We do not claim this
as new, and it does not settle the physical question: recipients occupy only
$X$, and the law \eqref{eq:air} is not integrable on $\mathbb R$ as written.

\subsection{Threshold shadows identify finite discussions}

The physically relevant question is whether probes confined to the opinion
domain identify the discussion. For the class of measures that discussions
actually generate---finite empirical measures---the answer is yes under
explicit visibility and noncollision conditions.

\begin{theorem}[Restricted-probe identifiability]
\label{thm:restricted}
Let $F$ be the law \eqref{eq:air} with $\epsilon_1<1$. Let
$\mu\in\PP(X)$ be finitely supported and let $Q\in\PP(X)$ be arbitrary, with
atom set $A(Q)$. Call the combined atomic support
$S=\operatorname{supp}\mu\cup A(Q)$ \emph{generic} if no two distinct points
$a,b\in S$ satisfy $|a-b|$ in
$\{2\epsilon_1,\,2\epsilon_2,\,\epsilon_2-\epsilon_1,\,
\epsilon_1+\epsilon_2\}$ and no point $a\in S$ satisfies
$a-\tau\in\{-1,1\}$ for some $\tau\in T$. If $S$ is generic and
$V_F[\mu](x)=V_F[Q](x)$ for all $x\in(-1,1)$, then $Q=\mu$.
\end{theorem}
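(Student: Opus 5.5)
The plan is to work with the signed measure $\nu=Q-\mu$ and read its atoms off the discontinuities of the force field. By linearity, $V_F[\nu](x)=V_F[Q](x)-V_F[\mu](x)$ vanishes identically on $(-1,1)$, so its one-sided limits agree everywhere in $(-1,1)$. Decompose $Q=Q_a+Q_c$ into its atomic part, supported on the countable set $A(Q)$, and its diffuse part. Then
\begin{equation*}
V_F[\nu](x)=\sum_{a\in S}\nu(\{a\})\,F(a-x)+\int_X F(z-x)\,Q_c(\dd z).
\end{equation*}
The atomic sum converges absolutely and uniformly in $x\in X$, because $|F|\le\max(\epsilon_1,2\eta)$ on $[-2,2]$ and $\sum_a|\nu(\{a\})|\le2$.

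\textbf{Step 1 (continuity of the diffuse part).} Show that $x\mapsto\int F(z-x)\,Q_c(\dd z)$ is continuous on $X$. $F$ is bounded on $[-2,2]$ and continuous off $T$. At a given $x$, the discontinuity set of $z\mapsto F(z-x)$ is $x+T$, which has $Q_c$-measure zero because $Q_c$ is diffuse. Dominated convergence then gives continuity.

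\textbf{Step 2 (jumps of the atomic part).} By uniform convergence and the same dominated-convergence argument applied term by term, the one-sided limits of the atomic sum at $x$ can be taken termwise. Hence the jump of $V_F[\nu]$ at $x$ equals
\begin{equation*}
\sum_{(a,\tau):\,a-\tau=x}\nu(\{a\})\,J_\tau,
\end{equation*}
where $J_\tau\neq0$ is the jump of $F$ at $\tau$ with the orientation reversed by the change of variable $u=a-x$. Concretely $|J_{\pm\epsilon_1}|=\epsilon_1$ and $|J_{\pm\epsilon_2}|=\eta\epsilon_2$.

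Genericity now isolates each jump. A collision $a-\tau=b-\tau'$ with $a\ne b$ in $S$ would force $|a-b|=|\tau-\tau'|\in\{2\epsilon_1,2\epsilon_2,\epsilon_2-\epsilon_1,\epsilon_1+\epsilon_2\}$, which is excluded. Pairs with the same $a$ and $\tau\neq\tau'$ obviously give distinct locations. Thus each jump location $x=a-\tau$ carries exactly one term, and the jump there is $\nu(\{a\})J_\tau$.

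\textbf{Step 3 (every atom is visible).} Every $a\in S$ has at least one shadow strictly inside $(-1,1)$. If $a\ge0$, then $a-\epsilon_1\in[-\epsilon_1,1-\epsilon_1)\subset(-1,1)$. If $a<0$, then $a+\epsilon_1\in(-1,1)$. Both inclusions use $\epsilon_1<1$. The boundary clause of genericity, $a-\tau\notin\{\pm1\}$, guarantees that no shadow sits exactly at an endpoint, where only a one-sided limit would be available. Since $V_F[\nu]\equiv0$ on $(-1,1)$, the jump $\nu(\{a\})J_\tau$ must vanish, and so $\nu(\{a\})=0$ for every $a\in S$.

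\textbf{Step 4 (conclusion).} From Step 3, $Q(\{a\})=\mu(\{a\})$ for all $a\in S$. In particular, any atom of $Q$ outside $\operatorname{supp}\mu$ has zero mass, so it is not an atom. Therefore $Q_a=\mu$, which has total mass $1$. The diffuse part $Q_c$ is then a nonnegative measure of mass $1-1=0$, and hence $Q=\mu$.

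Positivity of $Q$ does real work here: it rules out a diffuse signed remainder whose force might vanish on $(-1,1)$.

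The main obstacle is Step 2 when $Q$ has countably many atoms, possibly accumulating. The plan is to justify termwise one-sided limits through the uniform absolute convergence bound above. One must also check that accumulating shadows of small atoms cannot cancel a given jump. This is handled because the jump at a fixed $x$ is an exact limit that receives contributions only from pairs with $a-\tau=x$. Uniform summability then kills the tail contributions as $x'\to x^\pm$.
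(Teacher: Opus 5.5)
Your proposal is correct and follows essentially the same route as the paper's proof: split $Q$ into atomic and diffuse parts, show the diffuse force is continuous by dominated convergence, and pass one-sided limits through the uniformly convergent atom series so that genericity leaves exactly one atom--threshold pair at each shadow; then the $\pm\epsilon_1$ shadows make every atom visible inside $(-1,1)$, and positivity of $Q$ forces $Q_c=0$. Two minor points: for $a\ge0$ the range of $a-\epsilon_1$ is $[-\epsilon_1,1-\epsilon_1]$, closed on the right; and the shadows you chose already lie strictly inside $(-1,1)$, so Step 3 does not actually need the boundary clause of genericity.
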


\begin{proof}
Write $Q=Q_a+Q_c$ with $Q_a$ atomic and $Q_c$ atomless, let $\nu=\mu-Q$ and
$W=V_F[\nu]$. The atomless part contributes a \emph{continuous} function of
$x$: $F$ is bounded with finitely many discontinuity points, so as
$x\to x_0$ the integrand $F(z-x)$ converges to $F(z-x_0)$ for all $z$
outside the $Q_c$-null set $\{x_0+\tau:\tau\in T\}$, and dominated
convergence applies. For an atom at $a$, the map $x\mapsto F(a-x)$ is
continuous except at the four shadow points $x=a-\tau$, $\tau\in T$, where
its right-minus-left jump is $-J_\tau$ because increasing $x$ traverses the
threshold in the reverse direction (magnitudes $\epsilon_1$ at
$\pm\epsilon_1$, $\eta\epsilon_2$ at $\pm\epsilon_2$); the atom series
converges uniformly (Weierstrass, $\sum_a w_a\sup|F|<\infty$), so one-sided
limits pass through the sum and the jump of $W$ at $x_0$ is
$-\sum_{\tau\in T}J_\tau[\mu-Q_a](\{x_0+\tau\})$. Two distinct atoms
$a\ne b$ contribute to the same shadow point only if $a-b=\tau-\tau'$, and
the set of such differences is exactly
$\pm\{2\epsilon_1,2\epsilon_2,\epsilon_2-\epsilon_1,
\epsilon_1+\epsilon_2\}$, excluded by genericity. Each shadow point in
$(-1,1)$ therefore isolates one atom through one threshold, and $W\equiv0$
forces $\mu(\{a\})=Q_a(\{a\})$ for every $a\in S$ with a shadow in
$(-1,1)$. Coverage holds for every $a\in[-1,1]$: if $a>\epsilon_1-1$ then
$a-\epsilon_1\in(-1,1)$, and otherwise $a+\epsilon_1\le2\epsilon_1-1<1$;
the boundary clause excludes shadows landing exactly on $\pm1$. Hence the
atomic parts coincide, so $Q_a$ has total mass $1$ and $Q_c=0$.
\end{proof}

\begin{theorem}[Finite-support inverse stability]
\label{thm:inverse-stability}
Let $F$ be the law \eqref{eq:air} with $\epsilon_1<1$, let
$\mu,Q\in\PP(X)$ be finitely supported, let their combined support
$S$ satisfy the genericity conditions of Theorem~\ref{thm:restricted}, and
write $s=|S|$ and $j_*=\min\{\epsilon_1,\eta\epsilon_2\}>0$. We use the
total-variation distance
$\|\mu-Q\|_{\mathrm{TV}}=\sup_{B\in\mathcal B(X)}|\mu(B)-Q(B)|$. Then
\begin{equation}
  \|\mu-Q\|_{\mathrm{TV}}
  \le \frac{s}{j_*}\DF(\mu,Q),
  \qquad
  \Wone(\mu,Q)
  \le \frac{2s}{j_*}\DF(\mu,Q).
  \label{eq:inverse-stability}
\end{equation}
If $\mu=\sum_{i=1}^n p_i\delta_{a_i}$ and a generic $Q$ has fewer than
$n$ support points, then
\begin{equation}
  \DF(\mu,Q)\ge\frac{j_*}{2}\min_i p_i.
  \label{eq:compression-floor}
\end{equation}
Thus cardinality reduction has a nonzero behavioral distortion floor on each
generic finite configuration.
\end{theorem}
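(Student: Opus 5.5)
The approach is to turn the qualitative jump argument in the proof of Theorem~\ref{thm:restricted} into a quantitative one. The main tool is the jump of $W=V_F[\nu]$ with $\nu=\mu-Q$. Both measures are finitely supported, so $W$ is piecewise continuous with finitely many discontinuities. At a shadow point $x_0=a-\tau\in(-1,1)$, genericity guarantees that only the single atom $a$ contributes, through the single threshold $\tau$. The jump there is therefore $-J_\tau\,\nu(\{a\})$, with $|J_\tau|\ge j_*$.

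The first step bounds one jump by the distortion. Since $\DF(\mu,Q)=\sup_{x\in X}|W(x)|$, the one-sided limits of $W$ at $x_0$ each have modulus at most $\DF$, so the jump has modulus at most $2\DF$. Hence
\[
|\nu(\{a\})|\le \frac{2\DF(\mu,Q)}{j_*}\qquad\text{for every }a\in S.
\]
Coverage is the same as in Theorem~\ref{thm:restricted}: each $a$ has the shadow $a-\epsilon_1$ or $a+\epsilon_1$ inside $(-1,1)$, and the boundary clause keeps it off $\pm1$.

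The second step converts this to total variation. Because $\nu$ has zero total mass and is supported on $S$,
\[
\|\nu\|_{\mathrm{TV}}=\tfrac12\sum_{a\in S}|\nu(\{a\})|\le \frac{s}{j_*}\DF .
\]
The factor $\tfrac12$ exactly absorbs the $2$ from the jump bound. For $W_1$ I will use the standard bound $W_1\le \operatorname{diam}(X)\,\|\nu\|_{\mathrm{TV}}$ on a space of diameter $2$: transport only the excess mass, each unit at cost at most $2$. This gives the factor $2s/j_*$.

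The floor follows by pigeonhole. If $Q$ has fewer than $n$ support points, some $a_i$ is not an atom of $Q$, so $|\nu(\{a_i\})|=p_i\ge\min_i p_i$. Applying the per-atom bound to that $a_i$ gives
\[
\DF\ge \frac{j_*}{2}\,p_i\ge \frac{j_*}{2}\min_i p_i .
\]
Genericity of the combined support $S=\operatorname{supp}\mu\cup\operatorname{supp}Q$ is what lets the jump at $a_i$'s shadow isolate $a_i$ alone.

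I expect the main obstacle to be justifying the one-sided-limit bound carefully. The supremum in $\DF$ is over the actual values of $W$ at points of $X$, and the jump point $x_0$ lies in the open interval $(-1,1)$. One therefore needs points on both sides of $x_0$ within $X$, which holds because $x_0$ is interior. One then passes to limits: $|W(x_0^\pm)|\le\sup|W|$. One must also confirm that $W$ has no other discontinuities coinciding with $x_0$. That is exactly the noncollision clause of genericity, now applied to the combined support of both finite measures.
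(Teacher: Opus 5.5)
Your proposal is correct and follows the paper's proof step for step. You isolate each atom of $S$ through an unshared shadow in $(-1,1)$, bound the jump $|J_\tau|\,|\nu(\{a\})|$ by $2\DF$, sum using $\|\nu\|_{\mathrm{TV}}=\tfrac12\sum_{a\in S}|\nu(\{a\})|$, pass to $\Wone$ via $\operatorname{diam}(X)=2$, and get the floor from an atom of $\mu$ missing from $Q$. The only differences are the sign convention for $W$ and your more explicit justification of the one-sided-limit bound, neither of which changes the argument.
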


\begin{proof}
Put $W=V_F[Q]-V_F[\mu]$. For each $a\in S$, the coverage argument in
Theorem~\ref{thm:restricted} supplies a shadow $x=a-\tau\in(-1,1)$ that no
other support point shares. The jump of $W$ there is
$-J_\tau[Q-\mu](\{a\})$. Both one-sided limits have magnitude at most $\DF$,
and hence
\begin{equation*}
  |[Q-\mu](\{a\})|\le \frac{2\DF}{|J_\tau|}
  \le\frac{2\DF}{j_*}.
\end{equation*}
Summing over $S$ and using
$\|\mu-Q\|_{\mathrm{TV}}=\tfrac12\sum_{a\in S}
|[\mu-Q](\{a\})|$ proves the first inequality. The second follows from
$\Wone\le\operatorname{diam}(X)\|\mu-Q\|_{\mathrm{TV}}=2\|\mu-Q\|_{\mathrm{TV}}$.
If $Q$ has fewer support points than $\mu$, some $a_i$ is absent from its
support; the isolated shadow at that atom has jump at least $j_*p_i$, whereas
any function bounded by $\DF$ has jumps no larger than $2\DF$.
\end{proof}

These are structural identifiability statements for the behavioral operator,
not finite-sample reconstruction guarantees. The probe variable $x$ indexes
possible recipient states, and the uniform force norm is a worst-case
dynamical discrepancy. Finite-agent discussions are atomic by construction
and need not have small support; the substantive idealizations are exact
response discontinuities and continuum probing. Conditioning under finite
probe grids, observation noise, or smooth transition layers remains open.

The identification theorem places no structural restriction on the summary
output: within
generic position of the atomic supports, \emph{no representation that alters
a finitely supported discussion at all---whether atomic, continuous, or
mixed---is exactly dynamically faithful}, even though recipients never probe
outside the opinion domain. Its quantitative inverse bound additionally
requires $Q$ to be finitely supported. For fixed finite atomic support
cardinalities, the excluded configurations form a finite union of affine
hyperplanes in the support-location parameter space; no codimension claim is
made for arbitrary probability measures.

\begin{remark}
\label{rem:inverse}
Theorem~\ref{thm:restricted} specializes the general inverse problem of
identifying atomic measures from kernel responses---the setting of
finite-rate-of-innovation sampling and Prony-type reconstruction
\citep{vetterli2002}---to the restricted-probe geometry created by
thresholded social influence, where the ``sampling kernel'' is a behavioral
law and its jump set does the identification. Two directions remain open: the
non-generic lattice case, where jump
contributions can cancel and the constraint
$\sum_{\tau\in T}J_\tau\,\nu(\{x_0+\tau\})=0$ governs possible nullspaces;
and discussions that are themselves non-atomic, where the analogous object
is the kink calculus of $W$. Theorem~\ref{thm:inverse-stability} resolves
stable recovery for generic finite supports in the uniform force norm. If two
finitely supported candidates both fit an observed force field within uniform
error $\zeta$, and their combined support satisfies the theorem's genericity
conditions and has $m$ points, then their total-variation distance is at most
$\min\{1,2m\zeta/j_*\}$. In particular, if each candidate has at most $s$
support points, the bound is $\min\{1,4s\zeta/j_*\}$. This topology is
necessarily strong: moving an atom across a response jump by an arbitrarily
short distance produces order-one force error. The result complements
stability theory for smooth sampling kernels \citep{candes2014}.
\end{remark}

\subsection{Threshold-sensitive distortion: transport and cumulative mass}
\label{sec:transport}

For $L_F$-Lipschitz response, Kantorovich--Rubinstein duality gives
immediately $\DF(\mu,Q)\le L_F\Wone(\mu,Q)$ \citep{villani2009}: transport
accuracy suffices. Thresholded laws are only piecewise Lipschitz, and the
implication fails in one direction only. Let
$J=\{\tau_1,\ldots,\tau_K\}$ be the jump set, $F$ be $L$-Lipschitz on each
component of $(X-X)\setminus J$, $M=\sup|F|$, and define the threshold-layer
mass
\begin{equation}
  M_\delta(\mu)
  =\sup_{x\in X}\mu\{z:\operatorname{dist}(z-x,J)\le\delta\}.
  \label{eq:threshold-mass}
\end{equation}

\begin{theorem}[Threshold-aware upper bound]
\label{thm:threshold-bound}
For every $\delta>0$,
\begin{equation}
  \DF(\mu,Q)
  \le L\Wone(\mu,Q)
  +2M\left[
    \frac{\Wone(\mu,Q)}{\delta}+M_\delta(\mu)
  \right].
  \label{eq:threshold-bound}
\end{equation}
If $\mu$ has a density bounded by $\rho$, optimizing $\delta$ gives
$\DF\le Lw+4M\sqrt{2\rho Kw}$ with $w=\Wone(\mu,Q)$.
\end{theorem}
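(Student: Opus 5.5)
The plan is to control $W(x)=\int F(z-x)[Q-\mu](\dd z)$ for a fixed probe $x\in X$ through an optimal coupling $\pi$ of $(\mu,Q)$. We write
\[
W(x)=\int\big[F(z'-x)-F(z-x)\big]\,\pi(\dd z,\dd z').
\]
The integration domain is then split into two parts. The \emph{good} set $G_x$ consists of pairs $(z,z')$ with $|z-z'|<\delta$ and $\operatorname{dist}(z-x,J)>\delta$. The \emph{bad} set is its complement.

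\textbf{Good pairs.} On $G_x$, the segment from $z-x$ to $z'-x$ has length less than $\delta$. Since $z-x$ lies at distance more than $\delta$ from $J$, the segment contains no jump point. It also lies in $X-X$ up to the convention at the endpoints, because $z,z'\in X$. Hence both points lie in one component on which $F$ is $L$-Lipschitz, and
\[
|F(z'-x)-F(z-x)|\le L|z-z'|.
\]
Integrating over $G_x$ gives a contribution of at most $L\int|z-z'|\,\dd\pi=L\Wone(\mu,Q)$.

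\textbf{Bad pairs.} On the complement of $G_x$ we only use $|F(z'-x)-F(z-x)|\le 2M$. The complement is contained in the union of two sets:
\[
\{|z-z'|\ge\delta\}\quad\text{and}\quad\{\operatorname{dist}(z-x,J)\le\delta\}.
\]
For the first set, Markov's inequality gives $\pi\{|z-z'|\ge\delta\}\le \Wone/\delta$. The second set depends only on the first marginal, so its $\pi$-mass equals
\[
\mu\{z:\operatorname{dist}(z-x,J)\le\delta\}\le M_\delta(\mu).
\]
Combining these and taking the supremum over $x$ yields \eqref{eq:threshold-bound}.

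The one point needing care is that the optimal coupling exists, which holds on compact $X$. If one prefers to avoid that, the argument works with a near-optimal coupling and a limit. It must also be checked that the bad-set bound uses the marginal $\mu$ rather than $Q$, which is why $M_\delta$ is evaluated at $\mu$.

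\textbf{Optimization for bounded density.} Suppose $\mu$ has density at most $\rho$. For each $x$, the set $\{z:\operatorname{dist}(z-x,J)\le\delta\}$ is a union of $K$ intervals, each of length $2\delta$. Therefore $M_\delta(\mu)\le 2\rho K\delta$, and the bound becomes
\[
Lw+2M\left(\frac{w}{\delta}+2\rho K\delta\right).
\]
Minimizing over $\delta$ gives $\delta=\sqrt{w/(2\rho K)}$, at which the bracket equals $2\sqrt{2\rho K w}$. This gives the stated bound $Lw+4M\sqrt{2\rho Kw}$. The case $w=0$ is trivial.

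I expect the main obstacle to be the Lipschitz step on good pairs. Specifically, one must check that the closed segment joining $z-x$ and $z'-x$ avoids $J$. The defining inequality $\operatorname{dist}(z-x,J)>\delta>|z-z'|$ gives exactly this, so the remaining work is only bookkeeping of strict versus non-strict inequalities at the jump points.
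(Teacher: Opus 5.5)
Your proposal is correct and follows essentially the same route as the paper's proof. Both use an optimal coupling and a Lipschitz bound on pairs whose difference segment avoids $J$, with the crude bound $2M$ elsewhere; the exceptional set is controlled by Markov's inequality plus the threshold-layer mass of the first marginal, and both then choose $\delta=\sqrt{w/(2\rho K)}$. The only difference is cosmetic: you define the good set directly as $\{|z-z'|<\delta,\ \operatorname{dist}(z-x,J)>\delta\}$ rather than as the non-crossing pairs, which also cleanly excludes atoms sitting exactly at a jump offset.
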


\begin{proof}
Couple $(Z,Z')$ optimally. Pairs whose difference interval crosses no jump
contribute at most $L|Z-Z'|$; a crossing contributes at most $2M$ and, if
$|Z-Z'|\le\delta$, requires $\operatorname{dist}(Z-x,J)\le\delta$; pairs with
$|Z-Z'|>\delta$ have probability at most $\Wone/\delta$ by Markov. Take
expectations and the supremum over $x$. The density case bounds
$M_\delta\le2\rho K\delta$ and sets $\delta=\sqrt{w/(2\rho K)}$.
\end{proof}

Because $M_\delta$ takes a supremum over recipients, the bound is
informative for spread-out discussions and deliberately weak for atomic
ones. The complementary question---when is large distortion
\emph{forced}---is answered by a cumulative-mass statistic, not a transport
one.

\begin{theorem}[Cumulative-discrepancy sandwich]
\label{thm:ks-lower}
Let $F$ have jump sizes $J_k=F(\tau_k^+)-F(\tau_k^-)$ at $\tau_k\in J$, be
$L$-Lipschitz on each component of $(X-X)\setminus J$, and attain one of its
one-sided limits at each jump. Define the \emph{oriented} tail
$A_k(x)=\{z:z-x\ge\tau_k\}$ at thresholds where $F$ is right-continuous and
$A_k(x)=\{z:z-x>\tau_k\}$ where it is left-continuous---for the law
\eqref{eq:air}, closed tails at $+\epsilon_1,+\epsilon_2$ and open tails at
$-\epsilon_1,-\epsilon_2$---and the threshold-offset discrepancy
\begin{equation}
  S_J^{Y}(\mu,Q)
  =\sup_{x\in Y}
  \Big|\sum_{k}J_k\,[Q-\mu]\big(A_k(x)\big)\Big|.
  \label{eq:sj}
\end{equation}
Then for every nonempty probe set $Y\subseteq X$,
\begin{equation}
  \big|\DF^{Y}(\mu,Q)-S_J^{Y}(\mu,Q)\big|
  \le L\,\Wone(\mu,Q).
  \label{eq:ks-lower}
\end{equation}
If a single threshold $\tau$ is active on the combined support,
$S_J^Y=|J_\tau|\sup_{x\in Y}|[Q-\mu](A_\tau(x))|$: the distortion equals a
probe-restricted Kolmogorov--Smirnov-type statistic at threshold offsets, up
to an error no larger than $L\Wone$.
\end{theorem}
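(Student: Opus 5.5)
The plan is to split $F$ into a pure-jump part and a Lipschitz remainder. Define
\[
H(u)=\sum_k J_k\,\mathbf 1\{u\in A_k(0)\},
\]
so $H$ is a sum of oriented Heaviside steps, with $A_k(0)=\{u\ge\tau_k\}$ or $\{u>\tau_k\}$ according to the one-sided continuity of $F$ at $\tau_k$. Then put $G=F-H$. The orientation convention is chosen exactly so that $H$ has the same one-sided value as $F$ at each $\tau_k$, so $G$ has no jumps on $X-X$. Because $G$ is $L$-Lipschitz on each component of $(X-X)\setminus J$ and continuous across each $\tau_k$, it is $L$-Lipschitz on the interval $X-X=[-2,2]$. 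This follows by chaining the Lipschitz estimate across the finitely many breakpoints; continuity at the breakpoints lets the piecewise bounds telescope.

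By linearity of $V_F$ in the measure, for each probe $x$,
\[
\int F(z-x)\,[Q-\mu](\dd z)=\sum_k J_k\,[Q-\mu](A_k(x))+\int G(z-x)\,[Q-\mu](\dd z).
\]
This holds since $H(z-x)=\sum_k J_k\mathbf 1_{A_k(x)}(z)$. Both measures live on $X$, so only the values of $G$ on $X-X$ matter. The function $z\mapsto G(z-x)$ is $L$-Lipschitz on $X$, and Kantorovich--Rubinstein duality therefore bounds the last term by $L\,\Wone(\mu,Q)$, uniformly in $x$.

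Taking absolute values, the triangle inequality gives, for each $x\in Y$,
\[
\bigl|\,|\text{force gap}(x)|-|\text{jump sum}(x)|\,\bigr|\le L\Wone.
\]
Passing to suprema over $Y$ preserves this, because $|\sup f-\sup g|\le\sup|f-g|$ for nonnegative $f,g$. This yields \eqref{eq:ks-lower}. The single-threshold statement follows by noting that if only one $\tau$ has $A_k(x)$ meeting the combined support nontrivially, the other indicator differences vanish or are constant. More precisely, for thresholds beyond the support range, $[Q-\mu](A_k(x))$ is $0$, since $A_k(x)$ contains either all of the mass or none of it and both are probability measures.

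The main obstacle is the gluing step: verifying that $G$ is genuinely continuous at each $\tau_k$, and hence globally Lipschitz. This requires checking that subtracting the oriented step removes the jump regardless of which one-sided limit $F$ attains. I would do it by computing $G(\tau_k^\pm)$ directly: $H$ jumps by exactly $J_k$ at $\tau_k$ and attains the same side as $F$. A minor subtlety is that thresholds lying outside $X-X$ are handled trivially.
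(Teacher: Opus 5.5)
Your proposal is correct and follows essentially the same route as the paper: the oriented-step decomposition $F=G+\sum_k J_k H_k$ with a continuous, $L$-Lipschitz remainder on $X-X$, Kantorovich--Rubinstein duality for the $G$-term, and the inequality $\bigl|\sup_x|a(x)|-\sup_x|b(x)|\bigr|\le\sup_x|a(x)-b(x)|$. Your explicit chaining argument for the global Lipschitz bound on $G$ and your reading of ``single active threshold'' (the other tails contain all or none of the combined mass, so their differences vanish) fill in steps that the paper states in a single line.
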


\begin{proof}
Write $F(u)=G(u)+\sum_k J_k H_k(u)$ with $H_k(u)=\mathbf 1[u\ge\tau_k]$ at
right-continuous jumps and $\mathbf 1[u>\tau_k]$ at left-continuous ones.
Each subtraction removes its jump on the side where $F$ attains its limit,
so $G$ is continuous on $X-X$ and $L$-Lipschitz there. For fixed $x$ the
force difference is the oriented jump sum plus a $G$-term bounded in modulus
by $L\Wone$ (Kantorovich--Rubinstein duality), so the two quantities differ
pointwise by at most $L\Wone$. Since
$|\sup_x|a(x)|-\sup_x|b(x)||\le\sup_x|a(x)-b(x)|$, taking suprema over
$x\in Y$ gives \eqref{eq:ks-lower}.
\end{proof}

\begin{remark}
The orientation is not cosmetic. With a uniformly closed convention,
$\mu=\delta_{-\epsilon_1}$ and $Q=\delta_{-\epsilon_1+\varepsilon}$ probed
at $x=0$ give a naive discrepancy of zero while
$\DF^{\{0\}}=\epsilon_1-\varepsilon$, violating \eqref{eq:ks-lower} by an
order-one amount: the law is left-continuous at $-\epsilon_1$, so the atom
sitting exactly on the threshold belongs to the indifferent branch and only
the open tail counts it correctly. The oriented convention matters only for
atoms located exactly at threshold offsets. The numerical verification
includes all four endpoint conventions, including this configuration.
\end{remark}

Theorem~\ref{thm:ks-lower} says that for thresholded response the dynamical
distortion \emph{is} the cumulative discrepancy at threshold offsets, up to
an error controlled by the transport distance; together with
Theorem~\ref{thm:threshold-bound} it identifies the threshold-sensitive
quantities that must be controlled in addition to transport error, and
Theorem~\ref{thm:threshold-bound} remains useful when only $\mu$'s
threshold-layer mass, rather than cumulative discrepancies, is known. For the
single probe $Y=\{0\}$, the atomic pair
$\mu_\varepsilon=\delta_{\tau-\varepsilon}$ and
$Q_\varepsilon=\delta_{\tau+\varepsilon}$ has
$\Wone=2\varepsilon\to0$. If $\varepsilon$ is small enough that no other
threshold lies between the atoms, then
$S_J^{\{0\}}=|J_\tau|$ and the force error is within $2L\varepsilon$ of that
jump. The restriction matters: over $Y=X$, translating the recipient can
activate another threshold, and the supremum need not select $J_\tau$.
Thus transport convergence does not imply dynamical convergence when
cumulative mass crosses a threshold offset. Conversely, if $S_J^Y$ and
$\Wone$ both vanish then $\DF^Y$ vanishes---a statement the one-sided bounds
alone would not justify, because the layer mass $M_\delta(\mu)$ in
Eq.~\eqref{eq:threshold-bound} need not vanish for atomic $\mu$.
Consequently, average displacement metrics cannot certify dynamical fidelity,
and a discontinuity alone does not preclude it---what matters is whether
cumulative discussion mass moves across a threshold offset.

\subsection{Dynamical consequences of force distortion}
\label{sec:recursive}

The preceding results concern a static input--output map. Two consequences
connect that map to evolution. First, for a Lipschitz response, a standard
Dobrushin argument turns one-step distortion into a finite-horizon trajectory
bound. Define the unmediated velocity
$b_0[\nu]=V_F[\nu]$ and the mediated velocity
\begin{equation*}
  b_\lambda[\nu]=V_F[\nu]
  +\lambda\{V_F[\SF[\nu]]-V_F[\nu]\}.
\end{equation*}
Let $F$ be $L_F$-Lipschitz and let $\mu_t$ and
$\widetilde\mu_t$ be projected characteristic flows on $X$ driven by $b_0$
and $b_\lambda$, respectively, with the same initial measure. Assuming these
flows exist, for every $t$ in their common interval of existence,
\begin{equation}
  \Wone(\mu_t,\widetilde\mu_t)
  \le\lambda\int_0^t
  e^{2L_F(t-s)}
  \DF(\widetilde\mu_s,\SF[\widetilde\mu_s])\,\dd s .
  \label{eq:recursive-bound}
\end{equation}
Indeed, transporting a common initial coupling along the two characteristic
flows, the
Lipschitz force contributes at most
$2L_F\mathbb E|X_t-\widetilde X_t|$: one term comes from the recipient
position and one from the measure argument. The summary forcing contributes
at most $\lambda\DF(\widetilde\mu_t,\SF[\widetilde\mu_t])$. The normal cone
of the convex interval is monotone, so projection adds no positive term to
the distance derivative. Gronwall's inequality gives
Eq.~\eqref{eq:recursive-bound}.
This established estimate \citep{dobrushin1979,toscani2006} is included only
to distinguish error production from dynamical amplification. For a mollified
response $F_\varrho$ it holds with $L_{F_\varrho}=O(1/\varrho)$; no analogous
global uniqueness or stability claim is made here for the discontinuous AIR
flow. The strict response-cell results avoid that unresolved continuation
problem.

Second, the behavioral distortion directly bounds a macroscopic observable
without any Lipschitz assumption.

For any odd response law and any empirical state whose agents are in the
interior of $X$, the population mean $m_N=N^{-1}\sum_i x_i$ satisfies
\begin{equation*}
  \begin{aligned}
    \dot m_N
    &=\lambda\int_X\bigl(V_F[\SF[\mu_N]]-V_F[\mu_N]\bigr)(x)
      \,\mu_N(\dd x),\\
    |\dot m_N|&\le\lambda\DF(\mu_N,\SF[\mu_N]).
  \end{aligned}
\end{equation*}
When projection is active, the right-hand side gains the boundary-reaction
term $N^{-1}\sum_i\{\Pi_{T_X(x_i)}B_i-B_i\}$.
Indeed, oddness makes the direct pairwise force vanish after summing over
ordered pairs. The same cancellation gives
$\int V_F[\mu_N]\,\dd\mu_N=0$, and the uniform bound follows from the
definition of $\DF$.

\section{Numerical methods}
\label{sec:methods}

The numerical study follows the theoretical regime map. Static force-field
experiments isolate moment-sufficient and threshold-sensitive compression;
fixed-cell integrations resolve translation and disagreement; projected
particles test the reduced descriptions beyond ideal point blocs; and
finite-size ensembles expose the role of error correlation. All experiments
use $\epsilon_1=0.60$, $\epsilon_2=0.90$, and
$\eta=0.40$. These dimensionless values are not fitted to behavioral data:
they make all three AIR branches accessible. At the reference state $y_0=0.5$,
direct exposure is repulsive ($2y_0=1>\epsilon_2$) while mean collapse is
attractive ($y_0<\epsilon_1$), away from switching surfaces. The choice
$\eta=0.40$ gives the interior threshold
$\lambda_c=2/7\approx0.2857$. The analysis depends on these inequalities,
not on the particular values. The first-exit calculations deliberately cross
those cells and are reported as stress tests rather than theorem validation.
Every pseudorandom calculation uses an explicit seed.

\subsection{Force fidelity across response thresholds}

The static fidelity experiment uses the symmetric discussion
$\mu=(\delta_{-0.5}+\delta_{0.5})/2$ and its mean-collapse representation
$\SF_0[\mu]=\delta_0$. The force fields generated by the two measures are
evaluated at 2001 equally spaced recipient positions in $[-1,1]$, first under
linear influence and then under the thresholded law \eqref{eq:air}. This
comparison holds the represented mean fixed while changing only the support,
thereby separating moment fidelity from fidelity of the nonlinear force
field.

Threshold sensitivity is examined through two perturbation families with
the same transport scale. In the atomic family, unit mass is moved from
$\epsilon_2-\varepsilon$ to $\epsilon_2+\varepsilon$, with 34 logarithmically
spaced values $10^{-5}\leq\varepsilon\leq5\times10^{-3}$. The corresponding
force error at $x=0$ is compared with the jump limit $\eta\epsilon_2$. As a
continuous control, a truncated Gaussian density on $[0.70,0.98]$, centered
at $0.88$ with standard deviation $0.04$, is translated by
$2\varepsilon$. Its force is evaluated by quadrature on 280001 points. The
atomic and continuous constructions therefore have identical $W_1$ distance
but sharply different threshold-layer mass.

The analytical distortion bounds are also evaluated on 200 independently
generated pairs of atomic probability measures, each containing between one
and five support points. Distortion is computed on 801 recipient positions.
The random-measure calculation uses the oriented half-line convention of
Theorem~\ref{thm:ks-lower}; separate adversarial cases place atoms exactly at
negative-threshold offsets to test the endpoint convention. These evaluations
test the numerical implementation of the bounds; the validity of the
inequalities follows from their proofs.

\subsection{Response-cell modes and finite-population dynamics}

The symmetric reduction \eqref{eq:twobloc-general} is computed to $T=15$ on
a $121\times101$ grid in mediated share
$\lambda\in[0,0.8]$ and retention $c\in[0,1]$, starting from $y_0=0.5$.
The finite-horizon surface uses projected Euler steps of $0.005$; the
representative trajectories use projected Runge--Kutta updates at the same
nominal step. Because the vector field is discontinuous, no fourth-order
convergence claim is made. Outcome classifications are compared on a nested
grid against step $0.010$ and horizon $T=30$. The map is a finite-horizon
diagnostic, not an asymptotic phase diagram.

Finite-population validation uses $N=240$ globally coupled agents governed
by Eq.~\eqref{eq:particle}. For the symmetric experiment, the agents form two
mirrored narrow blocs around $\pm0.5$ and are evolved to $T=12$ with time
step $0.02$. The observable is mean extremism
$N^{-1}\sum_i|x_i(t)|$, compared at each recorded time with the two-bloc
solution.

The signed-Laplacian spectrum is first checked at the asymmetric state
\begin{equation*}
  x=(-0.5,-0.3,0.5),\qquad q=(0.3,0.3,0.4),
\end{equation*}
where the two disagreement eigenvalues and the resulting threshold are the
values reported after Theorem~\ref{thm:spectral}. The plotted experiment scans
$x_r=(-r,0,r)$ over 421 radii $r\in[0.16,0.98]$, using
\begin{equation*}
  q^{\rm central}=(0.30,0.40,0.30),\qquad
  q^{\rm outer}=(0.40,0.20,0.40).
\end{equation*}
Radii within $2.5\times10^{-3}$ of a response threshold are omitted because
no open response cell is defined there. The prevalence experiment starts
from blocs at $y_+=0.47$ and $y_-=-0.47$
with true positive prevalence $q=0.55$. Prevalence-faithful, false-balance
($c_q=0.4$, giving $\widehat q=0.52$),
and majority-amplifying operators are compared at the marginal mediated
share. The mean trajectory is evaluated against
$m(0)+\lambda(\widehat q-q)\Delta_0t$ while the cell remains fixed.
The corresponding particle experiment uses 132 positive and 108 negative
agents and a recursively updated two-cluster summary with amplification
$a=2.2$. The deterministic particle trajectory is compared with the
asymmetric two-bloc reduction; robustness to independent opinion noise of
intensity $0.015$ is summarized by 128 realizations and a pointwise central
95\% empirical trajectory band.

A separate mode-resolved experiment uses
$x=(-0.55,-0.20,0.45)$ with $q=(0.30,0.30,0.40)$, which has the same
one-attractive, one-indifferent, one-repulsive response geometry as the
example following Theorem~\ref{thm:spectral}. For
$\lambda=0,0.01,\ldots,0.40$, the fixed-cell equations are integrated to
$T=0.25$ with step $5\times10^{-4}$. Each growth rate is estimated as the
logarithmic change of its weighted modal amplitude and compared with
$\rho_k(\lambda)=-(1-\lambda)\kappa_k-\lambda$.
At $\lambda_*=0.181385$, two prevalence perturbations are then evolved to
$T=20$ with step $0.002$: represented weights
$(0.26,0.30,0.44)$ have nonzero projection on the neutral mode, whereas
$(0.396630,0.18,0.423370)$ are constructed to be orthogonal to it. Their
mean trajectories are compared with Eq.~\eqref{eq:critical-projection}; all
direct distances, represented distances, and positions are monitored to
verify that the assumed response cell and interior conditions persist.

\subsection{Error correlation and marginal gap-band exit}

The deployment experiment compares shared and personalized summary errors at
$\lambda=0.45>\lambda_c$. Errors have standard deviation
$\sigma_s=0.05$ and are refreshed every $h=0.05$ over $T=40$. In the
conditional attraction-cell ensemble, uniform errors of half-width
$\sqrt{3}\sigma_s$ are propagated through the exact attraction-cell mean
recurrence for
$N\in\{30,60,120,240,480,960\}$, with 256 independent realizations per
size and architecture. The recurrence is exact while the represented signal
remains attractive and projection is inactive. Eight full-particle paths per
architecture at $N=30$ spot-check the recurrence over the complete horizon
and record minimum attraction and boundary margins; the recurrence ensembles
remain conditional calculations rather than a global cell-invariance claim.
The unbounded Gaussian
condition remains a nonlinear particle stress test with step $0.01$, 64
realizations, and $N\in\{30,120,480\}$. Percentile bootstrap $95\%$
intervals for variances and log--log slopes use 2000 resamples.

The bounded errors match the Gaussian variance but cannot exceed the initial
wedge tolerance $0.14$; Gaussian errors can, with per-refresh probability
$5.1\times10^{-3}$ for one idealized broadcast at $\sigma_s=0.05$. This
contrast separates conditional attraction-cell scaling from threshold-crossing
stress tests. The fitted dependence of final mean variance on $N$ tests the
known powers in Proposition~\ref{prop:deployment}; it is not presented as a
critical exponent or universality classification.

Two numerical controls distinguish refresh semantics and discretization
effects. The held-error dynamics are compared with a fully frozen summary,
using both the continuous and finite-step predictions stated above. Temporal
refinement uses paired noise streams at fixed $h$ with integration steps
$0.01$ and $0.005$. Finally, first exit of the gap from its marginal band is
studied in the exact two-bloc system at $\lambda=\lambda_c$, starting from
$y_\pm=\pm0.46$. For each of 2000 realizations, the first exit time of the gap
from $[\epsilon_2,2\epsilon_1]=[0.9,1.2]$ is recorded up to $T=200$, together
with the exit direction. Summary-response branches may switch before a gap
exit, so this observable is not called response-cell survival. Gaussian errors with
$\sigma_s\in\{0.04,0.05,0.06\}$ are compared with the variance-matched
uniform distribution. The two timesteps use identical pre-generated
refresh-error arrays; crossings are linearly interpolated within the detecting
step, and convergence is measured by the supremum distance between survival
curves and by paired exit-time and exit-status differences.
Raw samples, refinement records, seeds, software versions, and file hashes are
archived with the aggregate results.

\section{Results}
\label{sec:results}

The results follow the same forward-to-inverse logic as the analysis. We first
test the regime in which the representation enters the collective modes only
through its mean, then examine the two-bloc and stochastic consequences, and
finally cross AIR thresholds to show where moment and transport fidelity can
cease to control behavior. Numerical agreement with a closed formula is used
as a consistency check; the scientific comparisons concern mode selection,
cell exit, finite-population behavior, and correlation architecture.

\subsection{Exact response-cell reduction is mode selective}

Figure~\ref{fig:modes} resolves both disagreement modes of the asymmetric
three-bloc response cell. The direct signed Laplacian has
$\kappa=(-0.221576,0.541576)$. The first mode changes from growth to decay at
$\lambda_*=0.181385$, while the second remains strictly stable throughout
the scan. Fixed-step estimates of $\rho_k$ remain within
$1.4\times10^{-4}$ of $-(1-\lambda)\kappa_k-\lambda$, and every trajectory
remains at least $0.0188$ from a direct-response threshold, $0.0743$ inside
the summary-attraction zone, and $0.4349$ from the opinion boundary. Thus the
local loss-of-stability threshold is not a universal two-bloc value: it is
selected by the most negative mode of the current signed response geometry.

At $\lambda_*$, the active prevalence perturbation has neutral projection
$0.037559$ and predicted secular slope $0.006813$. The secular contribution
over $T=20$ is $0.1363$ and the total translation, including the stable-mode
transient, is $m(20)-m(0)=0.1370$. The perturbation constructed orthogonally
to the neutral mode has zero secular slope and only a decaying transient. Equation
\eqref{eq:critical-projection} matches the integrated means within
$1.3\times10^{-6}$. The smallest margins to a direct-response threshold,
summary-attraction threshold, and opinion boundary are respectively
$0.0390$, $0.0551$, and $0.3995$, so this comparison remains inside the
theorem's stated cell rather than relying on post-switch behavior.

\begin{figure*}[t]
  \centering
  \includegraphics[width=0.80\textwidth]{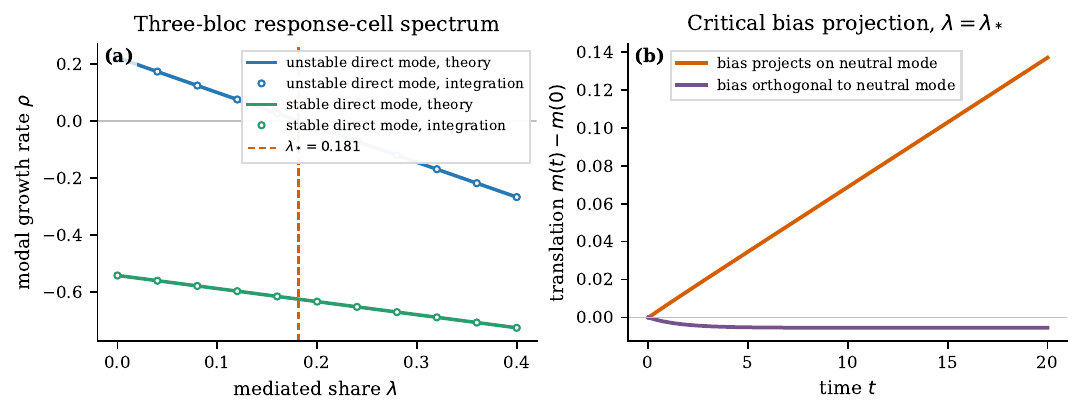}
  \caption{Mode-resolved validation of the three-bloc decomposition for
  $x=(-0.55,-0.20,0.45)$ and $q=(0.30,0.30,0.40)$.
  (a) Theoretical fixed-cell growth rates (lines) and rates estimated from
  numerical integration (symbols); the dashed line marks $\lambda_*$.
  (b) Translation at marginal stability for a prevalence bias with nonzero neutral
  projection and for a nonzero bias orthogonal to that mode. Solid curves are
  integrations and dashed curves are Eq.~\eqref{eq:critical-projection}.}
  \label{fig:modes}
\end{figure*}

\subsection{Two-bloc response-cell reversal at finite horizon}

At $c=0$ the finite-horizon outcome switch lies one grid interval from $2/7$
(Fig.~\ref{fig:recursive}). The displayed low-mediation trajectory reaches
the projected boundary, the high-mediation trajectory reaches consensus, and
the marginal initial state remains stationary while it stays in the response
cell. Increasing retention changes which summary--recipient distances are
attractive, indifferent, or repulsive, producing a $c$-dependent outcome map.
These are numerical states at the declared horizon rather than a proof of an
asymptotic bifurcation boundary. Outcome classifications agree at every point of the
nested grid when the step is reduced from $0.010$ to $0.005$; extending the
horizon from $T=15$ to $T=30$ changes $1.48\%$ of classifications, which
quantifies the coarse outcome sensitivity. The corresponding bloc magnitudes
can still differ by $0.467$ (99th percentile $0.323$), so value-level
convergence is not claimed near the switching regions. The $N=240$
particle system tracks the reduction within $8.1\times10^{-3}$ in mean
extremism (Fig.~\ref{fig:particles}).

\begin{figure*}[t]
  \centering
  \includegraphics[width=\textwidth]{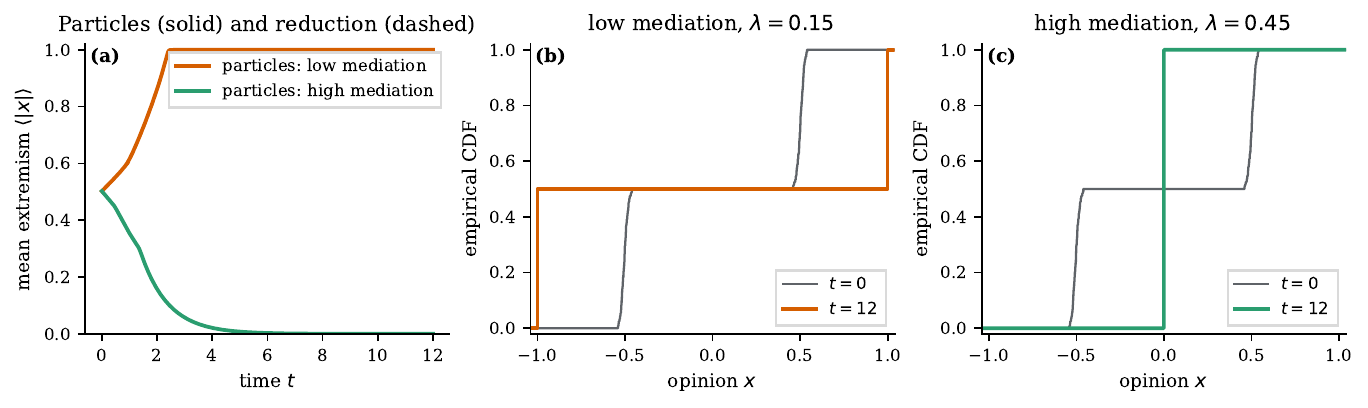}
  \caption{Projected $N=240$ particle validation at $T=12$ and step $0.02$.
  (a) Mean extremism for particles (solid) and the symmetric reduction
  (dashed). (b,c) Initial and final empirical CDFs for
  $\lambda=0.15$ and $0.45$, respectively.}
  \label{fig:particles}
\end{figure*}

\subsection{Prevalence error excites translation}

Figure~\ref{fig:drift} combines a validation of the two-bloc drift law with a
numerical check of the multi-bloc response-cell spectrum. At
$\lambda=\lambda_c$ the gap stays at $\Delta_0=0.94$ to within
$2\times10^{-13}$ for all three operators while the mean moves as predicted:
conserved at $0.047$ under the prevalence-faithful operator, drifting to
$-0.114$ under false balance and to $+0.362$ under majority amplification by
$T=20$. Unequal-mass particles with the recursive cluster summary follow the
reduction with maximum mean error $3.9\times10^{-5}$. The center panel uses a
three-position response geometry to expose changes in the leading
disagreement eigenvalue and the spectral threshold of
Theorem~\ref{thm:spectral}. With opinion noise $0.015$, the 128-run ensemble
retains a positive average drift but exhibits substantial trajectory
dispersion. Every path leaves summary attraction and 125 paths activate
boundary projection. This is a fully nonlinear projected stress test, not a
perturbative validation of the closed-form coefficient.

\begin{figure*}[t]
  \centering
  \includegraphics[width=\textwidth]{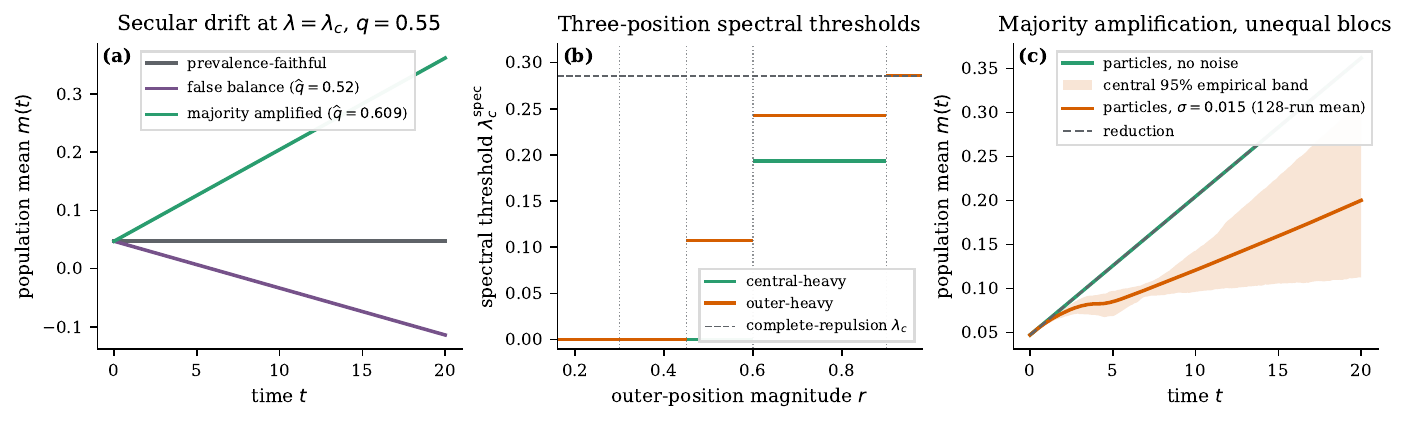}
  \caption{Response-cell modes and prevalence drift for
  $(\epsilon_1,\epsilon_2,\eta)=(0.6,0.9,0.4)$.
  (a) Two-bloc mean trajectories at $\lambda_c$; dotted curves are
  $m(0)+\lambda_c(\widehat q-q)\Delta_0t$.
  (b) Spectral mediation threshold in a three-position response geometry,
  with vertical lines at branch thresholds.
  (c) Unequal-mass particles under a majority-amplifying summary: noiseless
  reduction and particles, plus the mean and pointwise central 95\% empirical
  trajectory band from 128 noisy realizations.}
  \label{fig:drift}
\end{figure*}

\subsection{Error correlation controls collective fluctuations}

Figure~\ref{fig:regimes} illustrates Proposition~\ref{prop:deployment} at
$\lambda=0.45>\lambda_c$, where the population collapses toward consensus.
With bounded attraction-cell errors, bootstrap log--log slopes across six
population sizes are $0.002$ (95\% interval $[-0.061,0.065]$) for shared
errors and $-1.030$ ($[-1.088,-0.969]$) for personalized errors, matching the
analytical powers $0$ and $-1$. The shared variance remains near
$1.0125\times10^{-3}$, whereas personalized variance falls to
$1.03\times10^{-6}$ at $N=960$. Sixteen full-particle spot checks agree with
the exact conditional recurrence to numerical precision and retain positive
attraction and boundary margins. They validate the implementation, not global
cell invariance.

Gaussian errors leave summary attraction more often under personalized
deployment because any recipient can trigger a branch switch: over the full
horizon the exit fractions rise from $0.203$ at $N=30$ to $0.938$ at $N=480$,
versus $0.016$ to $0.063$ for shared errors. The corresponding slopes are
therefore reported only as stress-test descriptors. Paired refinement changes
the final mean by at most $1.3\times10^{-6}$. The frozen-summary control is
consistent with both its finite-step and continuous-time variance predictions.
Thus identical per-output error statistics do not determine the collective
regime; cross-recipient correlation does.

\begin{figure*}[t]
  \centering
  \includegraphics[width=\textwidth]{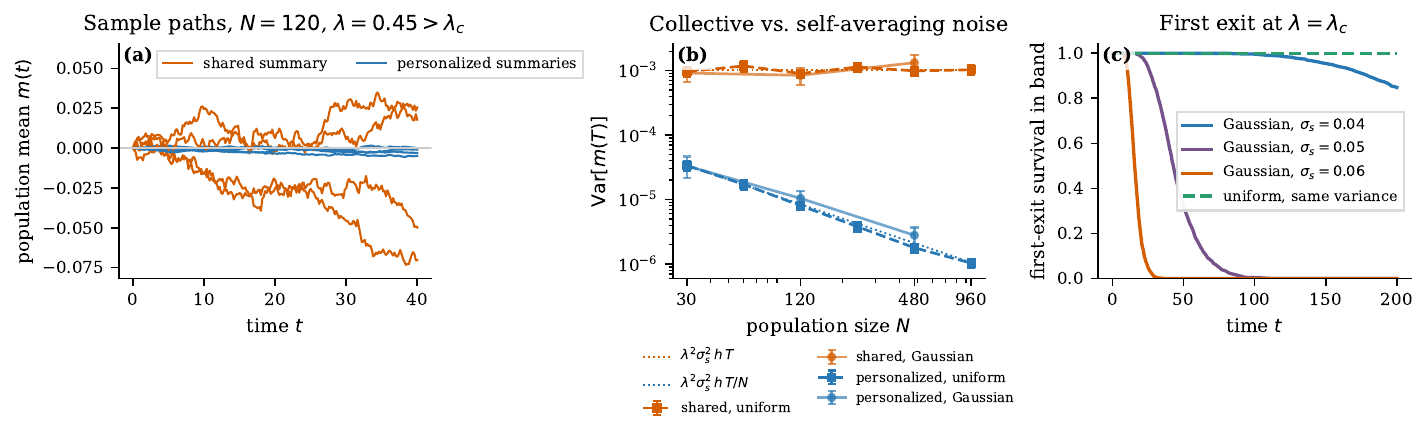}
  \caption{Deployment architecture and marginal gap-band first exit with
  $h=0.05$, $\sigma_s=0.05$, and projected step $0.01$ unless noted.
  (a) Population-mean paths for one shared error and independently
  personalized errors ($N=120$, $\lambda=0.45$).
  (b) Variance of $m(40)$ for bounded uniform attraction-cell errors (six
  sizes, 256 realizations per size) and Gaussian particle stress tests (three
  sizes, 64 realizations per size); bars are bootstrap 95\% intervals and
  dotted curves are the predicted scalings.
  (c) Survival in the gap band $[0.9,1.2]$ at $\lambda_c$ for 2000
  realizations per condition. Curves record first exit, not eventual
  destination.}
  \label{fig:regimes}
\end{figure*}

At marginality (Fig.~\ref{fig:regimes}, right), $85\%$ of the
$\sigma_s=0.04$ Gaussian runs remain in the gap band at $T=200$, whereas exit
is complete by approximately $115$ and $40$ for $\sigma_s=0.05$ and $0.06$.
Every first exit is outward, as expected when a tolerance violation removes
attraction while cross-bloc repulsion persists. The variance-matched bounded
control produces no exits in 2000 runs, isolating threshold-exceedance
probability rather than variance. Coarse and fine survival curves differ by at
most $1.5\times10^{-3}$, although rare pathwise hitting times remain sensitive;
no post-exit destination is inferred.

\subsection{Threshold crossings expose the limits of moment sufficiency}

The preceding collective reductions hold because the represented support stays
on a common linear branch. Figure~\ref{fig:force} closes the argument by
crossing that boundary in the minimal static counterexample:
the discussion and its mean-collapse representation generate identical linear
forces (distortion $1.1\times10^{-16}$), but under thresholded response the
positive bloc is pushed outward by the discussion ($+0.20$) and pulled inward
by the representation ($-0.50$). The exact one-sided supremum distortion is
$0.7700$ (the 2001-point grid maximum is $0.7693$) despite exact mean
preservation.

\begin{figure*}[t]
  \centering
  \includegraphics[width=\textwidth]{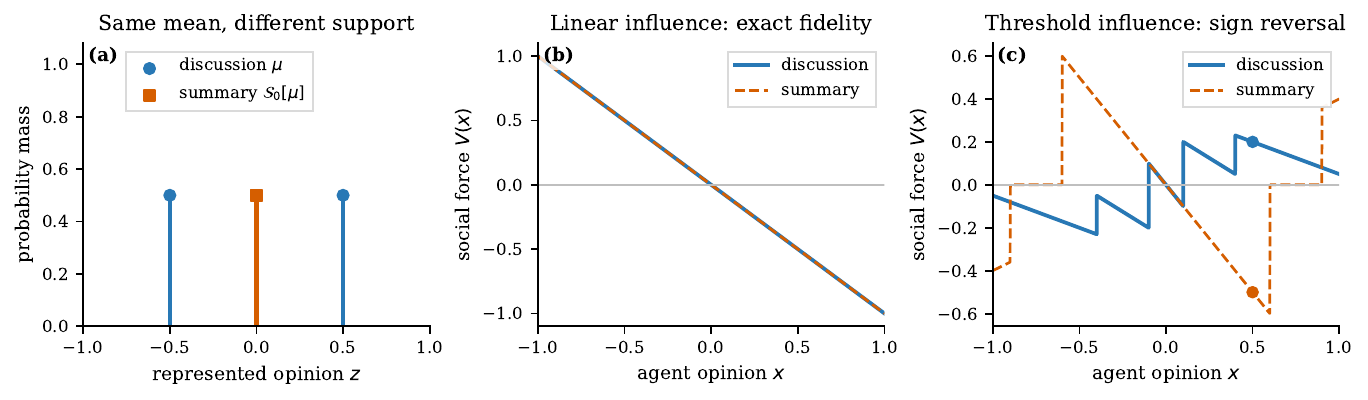}
  \caption{Force fidelity for
  $\mu=(\delta_{-0.5}+\delta_{0.5})/2$ and $\SF_0[\mu]=\delta_0$.
  (a) Equal means and different supports. (b) Identical force fields under
  linear response. (c) Opposite force directions at recipient $x=0.5$ under
  the threshold law with $(\epsilon_1,\epsilon_2,\eta)=(0.6,0.9,0.4)$.}
  \label{fig:force}
\end{figure*}

For the atomic threshold-crossing pair in Fig.~\ref{fig:threshold},
$\Wone$ falls to $2\times10^{-5}$ while the force error at $x=0$ remains
$0.360004$, equal to the jump limit $\eta\epsilon_2=0.36$ to four decimal
places. A translated smooth density with the same $\Wone$ has force error
$6.6\times10^{-5}$. This is the two-sided estimate of
Section~\ref{sec:transport}: transport distance controls the smooth part,
whereas cumulative mass crossing a response threshold controls the jump
part. Small transport error alone therefore does not define a dynamically
faithful coarse-graining.

\begin{figure*}[t]
  \centering
  \includegraphics[width=0.76\textwidth]{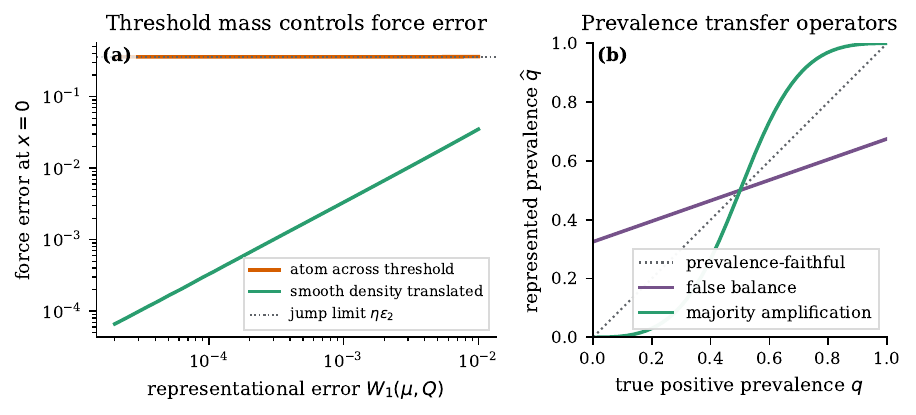}
  \caption{Threshold and prevalence transformations.
  (a) Force error at recipient $x=0$ when atomic or continuously distributed
  mass is displaced by $W_1$ across the outer response threshold; only the
  atomic sequence retains the jump-size error as $W_1\to0$.
  (b) Stylized false-balance ($c_q=0.35$) and majority-amplification
  ($a=2.5$) prevalence maps.}
  \label{fig:threshold}
\end{figure*}

\section{Empirical identification of summary operators}
\label{sec:calibration}

The abstract theory becomes a claim about any deployed summarization system
only after $\SF$ and the deployment architecture are measured. A three-stage identification design
separates the ingredients: generate discussions from known latent measures
(spanning prevalence, variance, skew, and mass near $\epsilon_1,\epsilon_2$,
with text realization randomized against confounds); summarize repeatedly
across models, prompts, and decoding seeds; and have independent raters map
outputs to perceived measures $Q$, yielding the kernel $K_\theta(\mu,\dd Q)$
and the operator-level quantities that the theory identifies as dynamically
relevant: prevalence transfer $q\mapsto\widehat q$, variance retention
$c^2$, threshold-layer transport, cumulative discrepancies at threshold
offsets (Theorem~\ref{thm:ks-lower}), and the regeneration variance
$\sigma_s^2$ together with whether outputs are broadcast or per-user
(Proposition~\ref{prop:deployment}).

The theory supplies kinetic signatures that do not require full
measure-valued calibration: the reversal at $\lambda_c$, the drift rate
$\lambda(\widehat q-q)\Delta$, which in a controlled fixed-cell experiment
with known $\lambda$ and $\Delta$ and other common drifts excluded estimates
the deployed operator's prevalence error, and the
$N$-(in)dependence of collective variance, which distinguishes the two
deployment architectures within the model; in field data this signature
would be confounded by other common shocks and must be interpreted jointly
with them. Human response must be estimated
separately from summary representation---randomized exposure experiments
testing whether updates are better predicted by $V_F[Q]$ than
$V_F[\mu]$---so that language models are used as the objects being measured,
never as substitutes for human subjects. Recursive experiments may be
especially sensitive near the spectral threshold, although optimal
experimental design is not analyzed here; this is also where ethical
safeguards matter most.

\section{Discussion}
\label{sec:discussion}

\subsection{Contribution hierarchy and scope}

The paper establishes a response-dependent map of behavioral sufficiency. On
a common attraction cell, the represented distribution is equivalent to its
mean for translation and is entirely invisible to disagreement. When probes
cross AIR thresholds, that equivalence collapses: translated threshold shadows
make generic finite supports identifiable, and cumulative mass transfer across
the shadows controls the part of force distortion that transport distance can
miss. These statements describe two regimes of the same endogenous
coarse-graining problem.

The primary model-specific results are therefore the exact response-cell
decomposition \eqref{eq:spectral-decomposition} and the bounded-probe
identifiability and inverse-stability theorems. The first derives a
summary-shape-independent disagreement operator while retaining represented-
mean error in translation; the second shows how the same response law recovers
support information once thresholds are visible and gives the explicit
support-reduction floor \eqref{eq:compression-floor}. The oriented cumulative
sandwich links the two regimes by isolating threshold-crossing mass.

The homogeneous spectral shift, the scalar two-bloc threshold, and the
shared-versus-independent self-averaging powers are not claimed as independent
mathematical innovations. They are consequences that translate the two main
characterizations into collective observables. At marginal stability,
prevalence bias produces secular translation only through its component along
a neutral disagreement mode; the two-bloc drift law is the complete-repulsion
specialization.

The deployment formulas have a deliberately narrower status. They are exact for common
translation errors in the point-bloc cell and for personalized errors applied
to a mean-preserving base operator while all represented positions remain
attractive. They do not establish a new common-noise principle or a
large-population limit. The six-size ensembles verify the resulting population-size
powers, but those powers are architectural self-averaging laws rather than
critical exponents. No thermodynamic phase transition or universality class
is claimed. Their use is operational: identical per-recipient error
marginals can produce different collective variance when their correlations
across recipients differ.

\subsection{Relation to existing work}

The work sits first in statistical-physics models of social dynamics
\citep{castellano2009,starnini2026}. Continuous bounded-confidence and
attraction--repulsion models supply the interacting population, while media
models supply fixed or distributed fields \citep{lorenz2007,sabinmiller2020,
vazmartins2010,pineda2015,pansanella2023}. Endogenous scalar AI signals
appear in oracle and aggregation models
\citep{rodrigo2025,acemoglu2026}. Reflexive institutions, platform-mediated
learning, predictive loops, and hierarchical coarse-graining establish that
endogenous feedback itself is not new
\citep{segoviamartin2021,candogan2022,wu2026,widler2026}. LLM influence has been incorporated into
bounded-confidence simulation \citep{li2026}; AI-mediated communication can
transform messages before social transmission \citep{tsirtsis2026}; and
recursive review summaries can bias social learning \citep{kim2026}.
Distribution-valued media inputs, global medians, fixed broadcasts,
recommenders, and LLM-agent populations cover adjacent mechanisms
\citep{kolarijani2021,berenbrink2026,mckeown2006,pansanella2023,
sirbu2013,sirbu2019,chuang2024}.
Accordingly, neither an endogenous AI signal nor anchored-network stability is
the novelty claimed here. The contribution is the response-dependent
sufficiency characterization: exact summary-induced mode separation inside a
common attraction cell and stable bounded-probe identification across AIR
thresholds. The neutral-mode bias criterion is a model-specific corollary of
the first characterization and standard linear spectral theory.

Signed-Laplacian spectral analysis, antagonistic consensus, signed
Friedkin--Johnsen stabilization,
and generalized solutions of discontinuous opinion systems are established
\citep{bronski2014,altafini2013,shi2019,razaq2025,ceragioli2012,altafini2018}.
In particular, the
centered homogeneous operator here coincides with the equal-stubbornness
signed-Friedkin--Johnsen transient operator. The new step claimed here is the
reduction of an endogenous distribution-valued summary channel to that
summary-shape-independent disagreement operator and the separation of
represented-mean error into translation. The associated neutral-mode
projection criterion follows transparently once that decomposition is known.
Similarly, shared versus independent noise is known in interacting systems
\citep{carmona2018,coculescu2024,vaidya2021}; the finite-cadence calculation
here is an application, not a claim of priority for common noise.

The fidelity analysis draws on a second literature. Task-relative information
value begins with comparison of statistical experiments \citep{blackwell1953}
and continues through task-oriented compression and decision-relative
evaluation \citep{guo2026,lee2026}. Integral probability metrics, transport
duality, and sparse-measure recovery are likewise established tools
\citep{muller1997,villani2009,vetterli2002,candes2014}. Our model-specific
inverse contribution is the threshold-shadow geometry created when the test class is
the translated social-response law and probes are restricted to the opinion
domain.

\subsection{Limitations and the path forward}

Opinions are one-dimensional and globally coupled, and summaries represent
stances rather than arguments. The response-cell theorem is local to an open
response cell, where the centered coordinates have a unique linear evolution.
Existence and uniqueness of the full trajectory additionally require
appropriate regularity of $x\mapsto\widehat m(x)$, which is not imposed here;
no global continuation is claimed across switching surfaces. The displayed
$(\lambda,c)$ surface is finite-horizon numerical
evidence. Boundary conditions are known to affect noisy bounded-confidence
phases \citep{goddard2022}; our projected convention is therefore part of the
model, not an innocuous implementation detail.

The inverse theorem covers separated finite supports. Non-generic threshold
lattices and non-atomic discussions remain unresolved. Personalized
prevalence distortion falls outside the exact variance coefficient because
independent errors randomize the shape-dependent systematic drift. The
Gaussian and opinion-diffusion runs are stress tests that can leave the strict
cells assumed by the proofs.

Most importantly, no language model or human response law is calibrated.
The prevalence maps and error levels are stylized. A preregistered study must
measure both $K_\theta(\mu,\dd Q)$ and recipient response before the model can
make claims about a deployed system.

A substantial remaining mathematical problem is a conditional
propagation-of-chaos and selection theorem for projected particles driven by
an endogenous random summary kernel under the discontinuous response law.
Common-noise limits and discontinuous bounded-confidence systems are known
separately; a useful advance would need to handle their combination and prove
that the numerical selection converges. That problem is not solved here. The
present contribution is narrower and explicit: it characterizes behavioral
sufficiency in two complementary regimes, through exact endogenous-summary
mode separation on strict response cells and bounded-probe identifiability and
stability for generic finite atomic discussions across the discontinuities.
The prevalence-bias projection criterion is a corollary of the fixed-cell
reduction; the scalar stabilization shift itself is not claimed as new.

\section*{Data and code availability}

The repository contains the source code, tests, figure-generation script,
aggregate results, raw population-size ensembles, mode trajectories, first-exit
records, refinement arrays, and a machine-readable provenance manifest at
\url{https://github.com/REsteche/social-modeling-2}. No proprietary model API
or external dataset is required to reproduce the reported computations. The
manifest records software versions, random seeds, and cryptographic hashes of
the generated artifacts.

\section*{Credit authorship contribution statement}

Ruben E. Ara\'ujo: Conceptualization, Methodology, Formal analysis, Software,
Validation, Investigation, Visualization, Writing--original draft,
Writing--review \& editing.

\section*{Funding}

This research did not receive any specific grant from funding agencies in
the public, commercial, or not-for-profit sectors.

\section*{Declaration of competing interest}

The author declares no known competing financial interests or personal
relationships that could have appeared to influence the work reported here.

\section*{Declaration of generative AI and AI-assisted technologies}

During preparation of this work, the author used Anthropic Claude and OpenAI
ChatGPT for literature discovery, code development, and drafting assistance.
The author subsequently reviewed and edited the manuscript, checked the
proofs, references, and numerical results, and takes full responsibility for
the published content.

\bibliographystyle{elsarticle-num-names}
\bibliography{references}

\end{document}